\newtheorem{Theorem}{Theorem}
\newtheorem{Lemma}{Lemma}
\newtheorem{Proposition}{Proposition}
\newtheorem{Remark}{Remark}
\newtheorem{Example}{Example}
\begin{document}
%
\title{On the Weight Hierarchy of Locally \\ Repairable Codes}


\author{\IEEEauthorblockN{Jie Hao\IEEEauthorrefmark{1}, Shu-Tao Xia\IEEEauthorrefmark{1}, Bin Chen\IEEEauthorrefmark{1}\IEEEauthorrefmark{2}, and Fang-Wei Fu  \IEEEauthorrefmark{3}}
\IEEEauthorblockA{ \IEEEauthorrefmark{1}Graduate School at Shenzhen, Tsinghua University, Shenzhen, China\\
\IEEEauthorrefmark{2}School of Mathematical Sciences, South China Normal University, Guangzhou, China\\
\IEEEauthorrefmark{3}Chern Institute of Mathematics and LPMC, Nankai University, Tianjin, China \\
Email: j-hao13@mails.tsinghua.edu.cn, xiast@sz.tsinghua.edu.cn, binchen14scnu@m.scnu.edu.cn,  fwfu@nankai.edu.cn}
}

\maketitle

\begin{abstract}
An $(n,k,r)$ \emph{locally repairable code} (LRC) is an $[n,k,d]$ linear code where every code symbol can be repaired from at most $r$ other code symbols. An LRC is said to be optimal if the minimum distance attains the Singleton-like bound  $d \le n-k-\lceil k/r \rceil +2$. The \emph{generalized Hamming weights} (GHWs) of linear codes are fundamental parameters which have many useful applications. Generally it is difficult to determine the GHWs of linear codes. In this paper, we  study the GHWs of LRCs. Firstly, we obtain  a generalized Singleton-like bound on the $i$-th $(1\le i \le k)$  GHWs of general $(n,k,r)$ LRCs. Then, it is shown that for an optimal $(n,k,r)$ LRC with  $r \mid k$, its weight hierarchy  can be completely determined, and the $i$-th GHW of an optimal $(n,k,r)$  LRC with  $r \mid k$ attains the proposed generalized Singleton-like bound for all $1\le i \le k$. For an optimal $(n,k,r)$ LRC with $r \nmid k$, we give  lower bounds on the GHWs of the  LRC and its dual code. Finally, two general bounds on linear codes in terms of the GHWs are presented. Moreover, it is also shown that some previous results on the bounds of minimum distances of linear codes can also be explained or refined in terms of the GHWs.

\end{abstract}

\section{Introduction}
Locally repairable codes have attracted a lot of interest recently. An $(n,k,r)$ LRC is an $[n,k,d]$ linear code with locality constraint on  the code symbols, i.e., each of the $n$ code symbols can be repaired by accessing at most $r$ other code symbols. The minimum distance of an LRC satisfies the well-known Singleton-like bound \cite{Gopalan}
\begin{equation} \label{singleton-like-bound}
 d \leq n -k - \left\lceil \frac{k}{r} \right\rceil + 2.
\end{equation}
When $r=k$, the above bound reduces to the classical Singleton bound $d\le n-k+1$ \cite{MacWilliams}.
Further studies on the bounds of LRCs can be found  in \cite{Hao}-\cite{Tamo-RS}.
Many works have proposed optimal LRCs attaining the Singleton-like bound (\ref{singleton-like-bound}), e.g., \cite{Tamo-RS}-\cite{binaryLRC}, among which codes with small field size are of particular interest.
The elegant Reed-Solomon-like optimal LRCs proposed by Tamo and Barg in \cite{Tamo-RS}  require the field size to be just slightly greater than the code length. The codes were further generalized to optimal cyclic LRCs in \cite{Tamo-cyclic} and LRCs on algebraic curves \cite{Barg-curve}.
All the possible four classes of optimal binary LRCs attaining the  bound (\ref{singleton-like-bound}) were found in \cite{binaryLRC}.
A bound of linear codes with a local-error-correction property were proposed in \cite{delta-bound} which could include  the bound (\ref{singleton-like-bound}) as a special case.

The \emph{generalized Hamming weight} (GHWs) \cite{Wei}\cite{Helleseth} are fundamental parameters of linear codes and were first used by Wei  in cryptography to fully characterize the performance of linear codes  in a wire-tap channel of type II \cite{Wei}.
Let $\mathcal{C}$ be a $q$-ary $[n,k,d]$ linear code and  $\mathcal{D}$ be a subcode of $\mathcal{C}$. Let ${\rm dim}(\mathcal{D})$ denote the dimension of $\mathcal{D}$.
 The \emph{support} of a vector is the set of coordinates of its non-zero components. If a coordinate is in the support of a vector, it is said to \emph{be covered} by the vector.
The  \emph{support} of $\mathcal{D}$ is defined to be
 \begin{equation*}
 {\rm supp}(\mathcal{D}) = \{i: \exists (c_1, c_2, \cdots, c_n)\in \mathcal{D}, c_i \neq 0\}.
 \end{equation*}
For $1 \le i \le k$, the $i$-th  GHW  of $\mathcal{C}$ is defined to be
 \begin{equation*}
 d_i= {\rm min} \{ | {\rm supp}(\mathcal{D}) | : \; \mathcal{D} \subset \mathcal{C} \; {\rm and} \; {\rm dim}(\mathcal{D}) = i \},
 \end{equation*}
 where $|{\rm supp}(\mathcal{D})|$ denotes the cardinality of the set ${\rm supp}(\mathcal{D})$.
Note that $ d_1$ is the minimum distance of $\mathcal{C}$. The $i$-th $(1 \le i \le k)$ GHW of $\mathcal{C}$ satisfies the generalized Singleton bound
\begin{equation}\label{generalized-singleton-bound}
  d_i \le n-k+i.
\end{equation}
MDS codes meet the generalized Singleton bound for all $1 \le i \le k$.
The \emph{weight hierarchy} of $\mathcal{C}$ is  the set of integers $\{d_i \; | \; 1 \le i \le k\}$. For an $[n,k]$ linear code $\mathcal{C}$, $1 \le d_1 < d_2 < \cdots < d_k =n$. The \emph{gap numbers} of $\mathcal{C}$ is  the complement of the set of its weight hierarchy, denoted as $\{g_i, 1 \le i \le n-k\} = [n] \setminus \{d_i \; | \; 1 \le i \le k\}$. Let $\mathcal{C}^{\bot}$ be the dual code of  $\mathcal{C}$ and its weight hierarchy and gap numbers respectively be $\{d_i^{\bot} \; | \; 1 \le i \le n-k\}$ and $\{g_i^{\bot} \; | \; 1 \le i \le k\}$. The  weight hierarchy of $\mathcal{C}$ and $\mathcal{C}^{\bot}$ satisfies the following property of duality.
\begin{equation}\label{weight-hierarchy-duality}
  \{d_i  \; | \; 1 \le i \le k\}=  \{1,2,\cdots,n \} \setminus \{n+1-d_j^{\bot} \; | \; 1 \le j \le n-k\}.
\end{equation}
In terms of the gap numbers of $\mathcal{C}^{\bot}$,
\begin{equation}\label{gap-number-dual}
    d_i  =  (n+1) - g_{k-i+1}^{\bot}, \; \;  1 \le i \le k.
\end{equation}
The minimum distance  $d = d_1  =  (n+1) - g_{k}^{\bot},$
where 
\begin{eqnarray}
  g_k^{\bot} &=&{\rm{max}}\{k+i: 1\le i \le n-k, d_i^{\bot} < k+i\}, \label{g_k_1}  \\
   &=& {\rm{min}}\{k+i: 1\le i \le n-k, d_i^{\bot} = k+i\} -1. \label{g_k_2}
\end{eqnarray}

 Many research works have been devoted to determine or estimate the GHWs of many series of linear codes, such as Hamming codes \cite{Wei}, Reed-Muller codes \cite{Wei,Heijnen}, BCH codes and their dual codes \cite{Van}-\cite{Vlugt}, etc..
 Generally speaking, it  is difficult to determine the GHWs of linear codes, and the complete weight hierarchy is known for only a few cases.
Recently some works  started to analyze the GHWs of particular LRCs.  The GHWs of LRCs on algebraic curves were studied in \cite{Ballico}.   Lalitha \emph{et al.} \cite{Lalitha} studied the GHWs and weight distributions of  maximally recoverable codes, which are variants of optimal LRCs \cite{Goppa-IT}.

In this paper, we study the GHWs of  LRCs.
Firstly, we obtain a generalized Singleton-like bound on the $i$-th ($1 \le i \le k$) GHWs of general $(n,k,r)$ LRCs. When $i = 1$, the proposed generalized Singleton-like bound gives the Singleton-like bound (\ref{singleton-like-bound}). When $r=k$, i.e., there is no locality constraint, the proposed generalized Singleton-like bound reduces to the classical generalized Singleton bound (\ref{generalized-singleton-bound}).
Then, for optimal $(n,k,r)$  LRCs attaining the Singleton-like bound (\ref{singleton-like-bound}), some lower  bounds on the  GHWs of optimal LRCs and their dual codes are obtained.
Surprisingly, it is shown that for an $(n,k,r)$ optimal LRC $\mathcal{C}$ meeting the Singleton-like bound (\ref{singleton-like-bound}) with $r \mid k$,
its weight hierarchy  can be  completely determined as
\begin{equation*}
  d_i= n-k-\left \lceil \frac{k-i+1}{r} \right \rceil + i + 1,  \; \; 1 \le i \le k.
\end{equation*}
Its dual code $\mathcal{C}^{\bot}$ has weight hierarchy
\begin{eqnarray*}
  d_i^{\bot}=
  \left\{
  \begin{array}{ll}
  i(r+1), \quad & \mbox{$1 \le i \le k/r$},\\
  k+i, \quad &\mbox{$ k/r < i \le n-k$}.
  \end{array}
  \right.
\end{eqnarray*}
Finally, we employ a parity-check matrix approach \cite{Hao}  to give  general upper bounds of linear codes in terms of the GHWs of their dual codes, which can  include several known bounds of LRCs as special cases. It is also shown that some results proposed in \cite{Tamo-matroid}\cite{Wang:IT} on bounds of minimum distances of linear codes  can  be explained or refined in terms of the GHWs.

The rest of this paper is organized as follows.
In Section \ref{sec:upper-bound}, upper bounds on the GHWs of general LRCs are presented.
Section~\ref{sec:lower-bound} studies the GHWs of optimal LRCs.
Section~\ref{sec:general-bound} analyzes the bounds of the linear codes in terms of the GHWs of the dual codes.
Section \ref{sec:conclusion} concludes the paper.

\section{Upper Bounds on the GHWs of  General LRCs} \label{sec:upper-bound}
In this section, we give some upper bounds on the GHWs of LRCs and their dual codes. Specially a generalized Singleton-like bound on the $i$-th ($1 \le i \le k$) GHWs of general $(n,k,r)$ LRCs is obtianed. The proposed generalized Singleton-like bound can give the Singleton-like bound (\ref{singleton-like-bound}) when $i=1$ and  reduce to the classical generalized Singleton bound (\ref{generalized-singleton-bound}) when there is no locality constraint.

\begin{Lemma} \label{thm-ghw-upper-bound-dual}
Let  $\mathcal{C}$ be an $(n,k,r)$ LRC and $\mathcal{C}^{\bot}$ be its dual code. The $i$-th ($1 \le i \le n-k$) GHW of the dual code $\mathcal{C}^{\bot}$ satisfies
\begin{eqnarray}\label{ghw-upper-bound-dual}
  d_i^{\bot} \le
  \left\{
  \begin{array}{ll}
  i(r+1), \quad & \mbox{$1 \le i \le \lfloor k/r \rfloor$},\\
  k+i, \quad &\mbox{$ \lfloor  k/r  \rfloor + 1 \le i \le n-k$}.
  \end{array}
  \right.
\end{eqnarray}
\end{Lemma}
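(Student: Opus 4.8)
The plan is to exploit the locality constraint directly at the level of the dual code. Saying that coordinate $j$ has locality $r$ means precisely that there is a codeword $h_j\in\mathcal{C}^{\bot}$ with $j\in\mathrm{supp}(h_j)$ and $|\mathrm{supp}(h_j)|\le r+1$ (a ``locality equation'' for $j$). The strategy is then to assemble $i$ such low-weight dual codewords that are linearly independent and whose combined support is small; their span is an $i$-dimensional subcode of $\mathcal{C}^{\bot}$ witnessing the claimed bound.

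For the first range $1\le i\le\lfloor k/r\rfloor$ I would build this subcode greedily. Pick a locality equation $h^{(1)}$ covering coordinate $1$. Having chosen $h^{(1)},\dots,h^{(t)}$ whose supports union to a set $T_t\subsetneq[n]$, pick any coordinate $j\notin T_t$ and set $h^{(t+1)}=h_j$. Since every vector in $\mathrm{span}(h^{(1)},\dots,h^{(t)})$ is supported inside $T_t$ while $h^{(t+1)}$ is nonzero on $j\notin T_t$, the new vector is linearly independent of the previous ones. Thus after $i$ steps we obtain a subcode $\mathcal{D}\subseteq\mathcal{C}^{\bot}$ with $\dim\mathcal{D}=i$ and $|\mathrm{supp}(\mathcal{D})|=|T_i|\le i(r+1)$, which gives $d_i^{\bot}\le i(r+1)$.

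The point that needs care is that this greedy process can actually be run for $i$ steps without exhausting all coordinates prematurely. Run it to completion, stopping at the step $l$ at which $T_l=[n]$ (it must stop, since $|T_t|$ strictly increases, and it can only stop when all coordinates are covered, because an uncovered coordinate always supplies an independent locality equation). The $l$ chosen vectors are independent in the $(n-k)$-dimensional code $\mathcal{C}^{\bot}$, so $l\le n-k$; and covering all $n$ coordinates by sets of size $\le r+1$ forces $n\le l(r+1)$. Combining these, $n\le(n-k)(r+1)$, i.e. $n\ge k(r+1)/r$, hence $l\ge\lceil n/(r+1)\rceil\ge\lceil k/r\rceil\ge\lfloor k/r\rfloor\ge i$, so at least $i$ locality equations are produced and the previous paragraph is justified. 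This termination count is the only genuinely delicate step; the rest is the independence observation plus bookkeeping.

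For the remaining range $\lfloor k/r\rfloor+1\le i\le n-k$ no locality is needed: $\mathcal{C}^{\bot}$ is an $[n,n-k]$ linear code, so the generalized Singleton bound (\ref{generalized-singleton-bound}) applied to $\mathcal{C}^{\bot}$ gives $d_i^{\bot}\le n-(n-k)+i=k+i$. Finally I would note the elementary equivalence $i(r+1)\le k+i\iff ir\le k\iff i\le\lfloor k/r\rfloor$, which shows that the two cases in (\ref{ghw-upper-bound-dual}) simply record whichever of the locality bound and the Singleton bound is smaller in each range, so nothing is lost by splitting the statement this way.
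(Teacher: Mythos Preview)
Your proof is correct and follows essentially the same greedy construction as the paper: pick locality equations covering successive uncovered coordinates to obtain an $i$-dimensional subcode of $\mathcal{C}^{\bot}$ with support at most $i(r+1)$, then invoke the generalized Singleton bound for the second range. The one place you are more careful is in justifying that the greedy process can actually be iterated $\lfloor k/r\rfloor$ times; the paper handles this by citing the inequality $\lfloor k/r\rfloor\le n-k$ from \cite{Hao}, whereas you derive the needed bound $l\ge\lceil k/r\rceil$ directly from $n\le l(r+1)$ and $l\le n-k$.
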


\begin{proof}
Since every code symbol in an $(n,k,r)$ LRC $\mathcal{C}$ has locality $r$, each coordinate is covered by at least one  parity-check equation with weight at most $r+1$ from  $\mathcal{C}^{\bot}$. By \cite{Hao}, $\lfloor  k/r  \rfloor \le n-k$. For  $1 \le i \le \lfloor  k/r  \rfloor$, let us select a subcode $\mathcal{D}^{\bot}$ of $\mathcal{C}^{\bot}$ with dimension $i$.  For the first coordinate, select a parity-check equation with weight at most $r+1$ to cover it; then, for the first uncovered coordinate, select another parity-check equation with weight at most $r+1$ to cover it; repeating the procedure iteratively until we obtain $i$ parity-check equations. Apparently, these $i$ parity-check equations are linearly independent, which implies that these $i$ vectors are a basis of a subcode $\mathcal{D}^{\bot}$  with dimension $i$. Since each of these $i$ vectors has weight $\le r+1$,  $|{\rm supp}(\mathcal{D}^{\bot})| \le i(r+1)$.
The  generalized Singleton bound says that $d_i^{\bot} \le k+i$. 
For $1 \le i \le \lfloor  k/r  \rfloor$, it holds that $i(r+1) \le k+i$. Hence for $1 \le i \le \lfloor  k/r  \rfloor$, $d_i^{\bot} \le |{\rm supp}(\mathcal{D}^{\bot})|  \le i(r+1).$  For $ \lfloor  k/r  \rfloor + 1 \le i \le n-k$,  since  $i(r+1) > k+i$, we have $d_i^{\bot} \le k+i$. Thus the conclusion follows.
\end{proof}

\smallskip
\begin{Lemma} \label{lemma-ghw-lower-bound-dual}
 Let $\mathcal{C}$ be an $(n,k,r)$ LRC and  $\mathcal{C}^{\bot}$ be its dual code. For $1 \le i \le \lfloor  k/r  \rfloor$, the  GHWs of the dual code $\mathcal{C}^{\bot}$ satisfies
\begin{equation}\label{ghw-upper-bound-dual}
  d_{i+1}^{\bot} \le d_{i}^{\bot} + (r+1).
\end{equation}
\end{Lemma}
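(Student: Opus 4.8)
The plan is to mimic the iterative parity-check selection used in the proof of Lemma~\ref{thm-ghw-upper-bound-dual}, but starting from a subcode that already witnesses $d_i^{\bot}$, and then adjoining one more low-weight parity-check equation. Concretely, fix $i$ with $1 \le i \le \lfloor k/r \rfloor$ and let $\mathcal{D}^{\bot} \subset \mathcal{C}^{\bot}$ be a subcode of dimension $i$ with $|{\rm supp}(\mathcal{D}^{\bot})| = d_i^{\bot}$. I would first argue that there exists a coordinate $j \notin {\rm supp}(\mathcal{D}^{\bot})$: indeed, since $i \le \lfloor k/r \rfloor$ we have $d_i^{\bot} \le i(r+1) \le \lfloor k/r \rfloor (r+1) \le k + \lfloor k/r \rfloor \le n$, and in fact one checks $i(r+1) < n$ in the relevant range (using $k < n$, as $\mathcal{C}$ has positive dimension and $d_1^{\bot}\ge 1$), so ${\rm supp}(\mathcal{D}^{\bot})$ does not exhaust $[n]$.

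Next, by the locality property, coordinate $j$ is covered by some parity-check equation $\mathbf{h} \in \mathcal{C}^{\bot}$ of weight at most $r+1$. Form the subcode $\mathcal{E}^{\bot} = \mathcal{D}^{\bot} + \langle \mathbf{h} \rangle$. Since $\mathbf{h}$ covers the coordinate $j \notin {\rm supp}(\mathcal{D}^{\bot})$, the vector $\mathbf{h}$ is not in $\mathcal{D}^{\bot}$, so ${\rm dim}(\mathcal{E}^{\bot}) = i+1$. Moreover ${\rm supp}(\mathcal{E}^{\bot}) \subseteq {\rm supp}(\mathcal{D}^{\bot}) \cup {\rm supp}(\mathbf{h})$, hence
\begin{equation*}
d_{i+1}^{\bot} \le |{\rm supp}(\mathcal{E}^{\bot})| \le |{\rm supp}(\mathcal{D}^{\bot})| + |{\rm supp}(\mathbf{h})| \le d_i^{\bot} + (r+1),
\end{equation*}
which is the desired inequality.

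The only delicate point — and the step I would write out most carefully — is the guarantee that a fresh coordinate $j \notin {\rm supp}(\mathcal{D}^{\bot})$ actually exists, since without it the low-weight equation $\mathbf{h}$ covering $j$ need not lie outside $\mathcal{D}^{\bot}$ and the dimension could fail to increase. This is where the hypothesis $i \le \lfloor k/r \rfloor$ is used, via the bound $d_i^{\bot} \le i(r+1)$ from Lemma~\ref{thm-ghw-upper-bound-dual} together with $i(r+1) < n$; I would double-check the boundary case $i = \lfloor k/r \rfloor$ to confirm the strict inequality $d_i^\bot < n$ holds there as well (it does, since $d_i^\bot \le k+i \le n-1$ whenever $\mathcal{C}^\bot$ is a proper subcode of $\mathbb{F}_q^n$, i.e. whenever $k \ge 1$). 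Everything else is immediate from the definitions of support and generalized Hamming weight.
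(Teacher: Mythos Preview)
Your proposal is correct and follows essentially the same argument as the paper: take a subcode $\mathcal{D}^{\bot}$ witnessing $d_i^{\bot}$, pick a coordinate outside its support, use locality to find a parity-check of weight at most $r+1$ covering that coordinate, and adjoin it to increase the dimension by one while enlarging the support by at most $r+1$. In fact you are more careful than the paper on the one nontrivial point---the paper simply says ``choose a coordinate $j$ from $[n]\setminus{\rm supp}(\mathcal{D}_i^{\bot})$'' without justifying nonemptiness, whereas you explicitly verify $d_i^{\bot} < n$ via Lemma~\ref{thm-ghw-upper-bound-dual} and the generalized Singleton bound.
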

\begin{proof}
Suppose $\mathcal{D}_i^{\bot}$ is a subcode  of $\mathcal{C}^{\bot}$ with dimension $i$ and  $|{\rm supp}(\mathcal{D}_i^{\bot})| = d_i^{\bot}$. Let  $\mathbf{H}_i=[(\mathbf{h}_1^T, \cdots \mathbf{h}_i^T)^T]$ be a basis of  $\mathcal{D}_i^{\bot}$ with $i$ independent parity-check equations from $\mathcal{D}_i^{\bot}$. Choose a coordinate $j$ from the set $[n] \setminus {\rm supp}(\mathcal{D}_i^{\bot})$. Then select a parity-check equation $\bar{\mathbf{h}}$ with weight at most $r+1$ to cover the coordinate $j$.  Apparently $\bar{\mathbf{h}}$ is independent with the vectors in $\mathbf{H}_i$, which indicates that $\mathbf{H}_{i+1} = [\mathbf{H}_i^{T}, \bar{\mathbf{h}}^T]^T$ is the basis of  a subcode $\mathcal{D}_{i+1}^{\bot}$  of $\mathcal{C}^{\bot}$  with dimension $i+1$. Since $|{\rm supp}(\bar{\mathbf{h}})| \le r+1$, we have $|{\rm supp}(\mathcal{D}_{i+1}^{\bot})| \le d_i^{\bot} + (r+1)$. Thus it follows that $ d_{i+1}^{\bot} \le |{\rm supp}(\mathcal{D}_{i+1}^{\bot})| \le  d_{i}^{\bot} + (r+1).$
\end{proof}

\smallskip

\begin{Lemma} \label{coro-ghw}
Let  $\mathcal{C}$ be an $(n,k,r)$ LRC and $\mathcal{C}^{\bot}$ be its dual code. If the $i$-th ($1 < i \le \lfloor  k/r  \rfloor$) GHW of  $\mathcal{C}^{\bot}$ satisfies that $d_i^{\bot} = i(r+1).$ Then for any $1\le j < i$, the $j$-th  GHW $d_j^{\bot} = j(r+1).$
\end{Lemma}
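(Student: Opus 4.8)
The plan is to deduce the claim purely by combining the two preceding lemmas, using the upper bound of Lemma~\ref{thm-ghw-upper-bound-dual} together with the ``increment'' inequality of Lemma~\ref{lemma-ghw-lower-bound-dual} as a matching lower bound. Fix $j$ with $1 \le j < i$. From Lemma~\ref{thm-ghw-upper-bound-dual} we already have $d_j^{\bot} \le j(r+1)$, so it suffices to establish the reverse inequality $d_j^{\bot} \ge j(r+1)$.

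To get the lower bound, I would apply Lemma~\ref{lemma-ghw-lower-bound-dual} iteratively to the consecutive indices $j, j+1, \ldots, i-1$. Each of these indices lies in the range $1 \le m \le \lfloor k/r \rfloor$ (since $m \le i-1 < i \le \lfloor k/r \rfloor$), so Lemma~\ref{lemma-ghw-lower-bound-dual} gives $d_{m+1}^{\bot} \le d_m^{\bot} + (r+1)$ for each such $m$. Telescoping these $i-j$ inequalities yields
\begin{equation*}
 d_i^{\bot} \le d_j^{\bot} + (i-j)(r+1).
\end{equation*}
Now substitute the hypothesis $d_i^{\bot} = i(r+1)$, which gives $i(r+1) \le d_j^{\bot} + (i-j)(r+1)$, i.e. $d_j^{\bot} \ge j(r+1)$.

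Combining $d_j^{\bot} \le j(r+1)$ and $d_j^{\bot} \ge j(r+1)$ gives $d_j^{\bot} = j(r+1)$, as desired. The only point that needs care is confirming that every index to which Lemma~\ref{lemma-ghw-lower-bound-dual} is applied stays within its stated validity range $1 \le m \le \lfloor k/r \rfloor$; this is immediate from $j \ge 1$ and $i \le \lfloor k/r\rfloor$. There is no substantial obstacle here beyond this bookkeeping—the statement is essentially a ``sandwich'' consequence of the already-proved upper bound and the step-size bound.
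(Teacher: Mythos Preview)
Your proposal is correct and follows essentially the same approach as the paper: the paper likewise rewrites Lemma~\ref{lemma-ghw-lower-bound-dual} as $d_j^{\bot} \ge d_{j+1}^{\bot} - (r+1)$, iterates down from $d_i^{\bot} = i(r+1)$ to obtain $d_j^{\bot} \ge j(r+1)$, and then closes the sandwich with Lemma~\ref{thm-ghw-upper-bound-dual}. Your version is slightly more explicit about the telescoping and the index-range check, but the argument is identical in substance.
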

\begin{proof}
By Lemma \ref{lemma-ghw-lower-bound-dual}, $d_{j}^{\bot} \ge d_{j+1}^{\bot} - (r+1)$, $1 \le j \le \lfloor k/r  \rfloor -1$.
Combining that $d_i^{\bot} = i(r+1)$, we obtain that for any $1\le j < i$, the $j$-th  GHW $d_j^{\bot} \ge j(r+1).$ According to Lemma \ref{thm-ghw-upper-bound-dual},  $d_j^{\bot} \le j(r+1).$ Thus the conclusion follows.
\end{proof}

\smallskip

\begin{Lemma} \label{lemma-gap-number}
Let  $\mathcal{C}$ be an $(n,k,r)$ LRC and $\mathcal{C}^{\bot}$ be its dual code. Then the gap numbers of the $\mathcal{C}^{\bot}$ satisfy
 \begin{equation}\label{gap-number-bound}
   g_i^{\bot} \ge \left \lceil \frac{i}{r} \right \rceil + i - 1, \quad \mbox{$1 \le i \le k$}.
\end{equation}
\end{Lemma}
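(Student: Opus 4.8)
The plan is to convert the asserted lower bound on the gap numbers of $\mathcal{C}^{\bot}$ into an upper bound on how many gaps can lie below a threshold, and then apply the GHW estimate of Lemma~\ref{thm-ghw-upper-bound-dual} to $\mathcal{C}^{\bot}$ itself. Recall that the gap numbers $g_1^{\bot}<g_2^{\bot}<\cdots<g_k^{\bot}$ are exactly the elements of $[n]$ that are \emph{not} generalized Hamming weights of $\mathcal{C}^{\bot}$, listed in increasing order; consequently, for every integer $m\le n$ one has $\#\{\,\text{gaps}\le m\,\}=m-\#\{\,j:d_j^{\bot}\le m\,\}$, since each of the $m$ integers in $\{1,\dots,m\}$ is either a gap or a weight of $\mathcal{C}^{\bot}$. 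Writing $L:=\lceil i/r\rceil+i-1$, it therefore suffices to show that at most $i-1$ gaps of $\mathcal{C}^{\bot}$ lie in $\{1,2,\dots,L-1\}$, equivalently that $\#\{\,j:d_j^{\bot}\le L-1\,\}\ge (L-1)-(i-1)=\lceil i/r\rceil-1$.

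Next I would exhibit $s:=\lceil i/r\rceil-1$ weights of $\mathcal{C}^{\bot}$ below the threshold $L-1$. If $s\ge 1$, then from $i\le k$ we get $s\le\lceil k/r\rceil-1\le\lfloor k/r\rfloor$, so Lemma~\ref{thm-ghw-upper-bound-dual} applies in the regime $1\le s\le\lfloor k/r\rfloor$ and yields $d_1^{\bot}<d_2^{\bot}<\cdots<d_s^{\bot}\le s(r+1)$. A short calculation with the ceiling — using that $q:=\lceil i/r\rceil$ satisfies $(q-1)r<i$, hence $(q-1)r+1\le i$ — gives $s(r+1)=(q-1)(r+1)=(q-1)r+(q-1)\le q+i-2=L-1$. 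Thus $d_1^{\bot},\dots,d_s^{\bot}$ are $s$ distinct integers lying in $\{1,\dots,L-1\}$, so $\#\{\,j:d_j^{\bot}\le L-1\,\}\ge s=\lceil i/r\rceil-1$, as needed. When $s=0$, i.e. $1\le i\le r$, this inequality reads $\#\{\,j:d_j^{\bot}\le L-1\,\}\ge0$ and is vacuous, and the desired conclusion is merely $g_i^{\bot}\ge i$, which holds because the $i$-th smallest element of a set of distinct positive integers is at least $i$.

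Putting the two steps together, at most $(L-1)-s=i-1$ gaps lie in $\{1,\dots,L-1\}$, so the $i$-th gap satisfies $g_i^{\bot}\ge L=\lceil i/r\rceil+i-1$, which is the claim. (To keep the counting over a valid range one checks $L-1<n$: from $i\le k$ and the Singleton-like bound~(\ref{singleton-like-bound}) with $d_1\ge1$ we get $n\ge k+\lceil k/r\rceil-1$, whence $L-1=\lceil i/r\rceil+i-2\le\lceil k/r\rceil+k-2<n$.) I do not expect a real obstacle here; the only delicate points are the elementary ceiling bookkeeping that gives $s(r+1)\le L-1$, and making sure the index $s$ stays in the range $1\le s\le\lfloor k/r\rfloor$ where Lemma~\ref{thm-ghw-upper-bound-dual} supplies the sharp bound $d_s^{\bot}\le s(r+1)$ rather than only $d_s^{\bot}\le k+s$.
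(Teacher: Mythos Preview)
Your argument is correct. Both your proof and the paper's rely on the same key input, Lemma~\ref{thm-ghw-upper-bound-dual} (that $d_j^{\bot}\le j(r+1)$ for $1\le j\le\lfloor k/r\rfloor$), together with the pigeonhole observation that if at least $s$ GHWs of $\mathcal{C}^{\bot}$ fall into $\{1,\dots,m\}$ then at most $m-s$ gaps can. The paper, however, packages this as an induction on $i$: the step $g_{i+1}^{\bot}\ge g_i^{\bot}+1$ handles the case $\lceil (i+1)/r\rceil=\lceil i/r\rceil$, and the counting via Lemma~\ref{thm-ghw-upper-bound-dual} is invoked only at the jump points $i=ar$. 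You instead unwind this into a single direct count valid for every $i$: set $s=\lceil i/r\rceil-1$, verify $s\le\lfloor k/r\rfloor$, and check once that $s(r+1)\le L-1$ via the ceiling inequality $(q-1)r+1\le i$. This buys a slightly cleaner presentation with no case split and no need for the base-case remark $g_1^{\bot}=1$ (you only use $g_1^{\bot}\ge 1$), at the modest cost of the bookkeeping to ensure $L-1<n$ so that the counting formula $\#\{\text{gaps}\le m\}=m-\#\{j:d_j^{\bot}\le m\}$ is valid.
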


\begin{proof}
The lemma is proved by induction. For $i=1$, it follows that $g_1^{\bot} =1$, since that if $g_1^{\bot} \ge 2$, then $d_1^{\bot} = 1$, which implies $\mathcal{C}$ contains a coordinate which is always zero for all the codewords of $\mathcal{C}$.
Suppose that for $1\le i \le k-1$, $g_i^{\bot} \ge \left \lceil \frac{i}{r} \right \rceil + i - 1$. Then consider the gap number $g_{i+1}^{\bot}$,
\smallskip
\begin{enumerate}
  \item If $\left \lceil \frac{i+1}{r} \right \rceil = \left \lceil \frac{i}{r} \right \rceil$, by $g_{i+1}^{\bot} \ge g_i^{\bot} + 1$, we have $g_{i+1}^{\bot} \ge (\left \lceil \frac{i}{r} \right \rceil + i - 1) + 1  = \left \lceil \frac{i+1}{r} \right \rceil + (i+1)-1$.
  \smallskip
  \smallskip
  \item  If $\left \lceil \frac{i+1}{r} \right \rceil = \left \lceil \frac{i}{r} \right \rceil+1$, i.e., $i=ar, i+1=ar+1$. By Lemma \ref{thm-ghw-upper-bound-dual}, $d_a^{\bot} \le a(r+1)$. Hence the set $\{1,\cdots, a(r+1)\}$ contains at least $\{d_1^{\bot},\cdots, d_a^{\bot}\}$. In other words, the  set $\{1,\cdots, a(r+1)\}$ contains at most these gap numbers $\{g_1^{\bot} \cdots g_{ar}^{\bot}\}$. Then it follows that  $g_{ar+1}^{\bot} \ge a(r+1)+1$, i.e., $g_{i+1}^{\bot} \ge \left \lceil \frac{i+1}{r} \right \rceil + i$.
\end{enumerate}
Combining the above two cases, the conclusion follows.
\end{proof}

\smallskip
\begin{Theorem} \label{thm-weight-hierarchy}
Let  $\mathcal{C}$ be an $(n,k,r)$ LRC, the $i$-th ($1 \le i \le k$) GHW  of $\mathcal{C}$ satisfies the  generalized Singleton-like bound
\begin{equation} \label{weight-hierarchy}
d_i \le n-k-\left \lceil \frac{k-i+1}{r} \right \rceil + i + 1.
\end{equation}
\end{Theorem}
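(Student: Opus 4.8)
The plan is to derive this bound by dualizing: rather than arguing directly about subcodes of $\mathcal{C}$, I would pass to the dual code $\mathcal{C}^{\bot}$ via the gap-number identity (\ref{gap-number-dual}), which states that $d_i = (n+1) - g_{k-i+1}^{\bot}$ for all $1 \le i \le k$. Since $1 \le i \le k$ forces $1 \le k-i+1 \le k$, the index $k-i+1$ lies in the range for which Lemma~\ref{lemma-gap-number} applies to the gap numbers of $\mathcal{C}^{\bot}$.

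Concretely, the first step is to invoke Lemma~\ref{lemma-gap-number} with index $k-i+1$, giving
\begin{equation*}
g_{k-i+1}^{\bot} \ge \left\lceil \frac{k-i+1}{r} \right\rceil + (k-i+1) - 1 = \left\lceil \frac{k-i+1}{r} \right\rceil + k - i .
\end{equation*}
The second step is to substitute this into (\ref{gap-number-dual}):
\begin{equation*}
d_i = (n+1) - g_{k-i+1}^{\bot} \le (n+1) - \left\lceil \frac{k-i+1}{r} \right\rceil - (k-i) = n - k - \left\lceil \frac{k-i+1}{r} \right\rceil + i + 1,
\end{equation*}
which is exactly (\ref{weight-hierarchy}). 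As a sanity check, taking $i=1$ recovers the Singleton-like bound (\ref{singleton-like-bound}), and taking $r=k$ makes every ceiling equal to $1$, recovering the classical generalized Singleton bound (\ref{generalized-singleton-bound}).

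Since the real content — the lower bound on the gap numbers of $\mathcal{C}^{\bot}$ obtained from the locality structure (each coordinate covered by a dual codeword of weight at most $r+1$, handled by the induction in Lemma~\ref{lemma-gap-number} together with the GHW upper bound $d_a^{\bot} \le a(r+1)$ of Lemma~\ref{thm-ghw-upper-bound-dual}) — has already been established, the only step that requires any care here is the bookkeeping: checking that the index shift $i \mapsto k-i+1$ stays within the hypotheses of Lemma~\ref{lemma-gap-number}, and simplifying the arithmetic with the ceiling function. There is no genuine obstacle beyond this; the theorem is essentially a repackaging of Lemma~\ref{lemma-gap-number} through the weight-hierarchy/gap-number duality.
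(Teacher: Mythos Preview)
Your proposal is correct and follows exactly the paper's own argument: apply the duality identity (\ref{gap-number-dual}) to write $d_i = (n+1) - g_{k-i+1}^{\bot}$, then bound $g_{k-i+1}^{\bot}$ from below using Lemma~\ref{lemma-gap-number}. The arithmetic and the index check are handled just as in the paper, so there is nothing to add.
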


\begin{proof}
For $1 \le i \le k$, the $i$-th GHW satisfies
\begin{eqnarray*}
  d_i &\overset{(a)}{=}& n+1 - g_{k-i+1}^{\bot} \\
      &\overset{(b)}{\le}& n+1 - \left (\left \lceil \frac{k-i+1}{r} \right \rceil + k-i \right ) \\
      &=& n-k-\left \lceil \frac{k-i+1}{r} \right \rceil + i + 1,
\end{eqnarray*}
where ($a$) follows from (\ref{gap-number-dual}) and ($b$) follows from Lemma \ref{lemma-gap-number}.
\end{proof}

\bigskip
\begin{Remark}
Note  that when $i=1$, the generalized Singleton-like  bound (\ref{weight-hierarchy}) gives  the Singleton-like bound (\ref{singleton-like-bound}).
\end{Remark}

\bigskip

\begin{Remark}
 When $r=k$, i.e., there is no locality constraint on the code symbols of $\mathcal{C}$, the generalized Singleton-like bound (\ref{weight-hierarchy}) reduces to the  classical generalized Singleton bound $d_i \le n-k+i$.
\end{Remark}
\section{The GHWs of Optimal LRCs } \label{sec:lower-bound}
In this section, we will focus on the GHWs of optimal LRCs meeting the Singleton-like bound (\ref{singleton-like-bound}). It is shown that for  an optimal $(n,k,r)$  LRC $\mathcal{C}$  with $r \mid k$,  the  weight hierarchy of $\mathcal{C}$ and its dual code $\mathcal{C}^{\bot}$ can be completely determined. For the other  case that $r \nmid k$, we give some lower bounds on the GHWs.

\subsection{The Weight Hierarchy of Optimal LRCs with $r \mid k$}
\begin{Theorem}\label{thm:weight-hierarchy-dual-code}
Let $\mathcal{C}$  be an optimal $(n,k,r)$ LRC meeting the Singleton-like bound (\ref{singleton-like-bound}) with $r \mid k$ and $\mathcal{C}^{\bot}$ be its dual code. The weight hierarchy of $\mathcal{C}^{\bot}$ can be completely determined as
\begin{eqnarray}\label{weight-hierarchy-dual-code}
  d_i^{\bot}=
  \left\{
  \begin{array}{ll}
  i(r+1), \quad & \mbox{$1 \le i \le k/r$},\\
  k+i, \quad &\mbox{$ k/r + 1 \le i \le n-k$}.
  \end{array}
  \right.
\end{eqnarray}
\end{Theorem}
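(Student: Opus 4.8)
The plan is to use the optimality of $\mathcal{C}$ only to pin down a single quantity, the largest gap number of $\mathcal{C}^{\bot}$, and then let the upper bounds already established in Lemma~\ref{thm-ghw-upper-bound-dual} and Lemma~\ref{coro-ghw} force every $d_i^{\bot}$ to its extremal value. First I would record what optimality says. Since $r\mid k$, the Singleton-like bound (\ref{singleton-like-bound}) reads $d=d_1=n-k-k/r+2$, and $\mathcal{C}$ being optimal means equality holds here; by the duality identity (\ref{gap-number-dual}) with $i=1$ this is equivalent to $g_k^{\bot}=n+1-d_1=k+k/r-1$. In other words, the largest gap number of $\mathcal{C}^{\bot}$ meets the lower bound of Lemma~\ref{lemma-gap-number} at $i=k$ with equality.

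The key step is then a short counting argument showing $d_{k/r}^{\bot}=(k/r)(r+1)$. All $k$ gap numbers of $\mathcal{C}^{\bot}$ are at most $g_k^{\bot}=k+k/r-1$, hence all lie in the $(k+k/r-1)$-element set $\{1,\dots,k+k/r-1\}$; therefore exactly $(k+k/r-1)-k=k/r-1$ integers of this set occur as generalized Hamming weights of $\mathcal{C}^{\bot}$. Since the $d_j^{\bot}$ are strictly increasing, these are $d_1^{\bot},\dots,d_{k/r-1}^{\bot}$, so $d_{k/r}^{\bot}\ge k+k/r$, while Lemma~\ref{thm-ghw-upper-bound-dual} gives $d_{k/r}^{\bot}\le(k/r)(r+1)=k+k/r$; hence $d_{k/r}^{\bot}=(k/r)(r+1)$.

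The two ranges of the claimed formula now follow by propagation. If $k/r\ge2$, Lemma~\ref{coro-ghw} applied with index $k/r$ upgrades the previous step to $d_j^{\bot}=j(r+1)$ for all $1\le j<k/r$, which together with $d_{k/r}^{\bot}=(k/r)(r+1)$ covers the range $1\le i\le k/r$; if $k/r=1$ that range is the single value just found. For $k/r+1\le i\le n-k$ (a nonempty range only when $k/r<n-k$) I would combine the strict monotonicity $d_1^{\bot}<d_2^{\bot}<\cdots$ recalled in the introduction with the generalized Singleton bound $d_i^{\bot}\le k+i$: from $d_{k/r+1}^{\bot}\ge d_{k/r}^{\bot}+1=k+k/r+1$ and $d_{k/r+1}^{\bot}\le k+k/r+1$ we get $d_{k/r+1}^{\bot}=k+k/r+1$, and then $d_i^{\bot}\ge d_{k/r+1}^{\bot}+(i-k/r-1)=k+i$ together with $d_i^{\bot}\le k+i$ yields $d_i^{\bot}=k+i$.

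I expect the counting argument of the second paragraph to be the main obstacle, since it is the only place where optimality genuinely enters: it converts the statement ``the top gap number of $\mathcal{C}^{\bot}$ is as small as Lemma~\ref{lemma-gap-number} permits'' into ``$d_{k/r}^{\bot}$ is as large as Lemma~\ref{thm-ghw-upper-bound-dual} permits,'' and one must count the generalized Hamming weights below the threshold $k+k/r$ exactly. Everything afterwards is a routine downward propagation through Lemma~\ref{coro-ghw} and an upward propagation through the generalized Singleton bound, the only minor wrinkle being the degenerate case $k/r=1$, where Lemma~\ref{coro-ghw} is vacuous and the first range is immediate.
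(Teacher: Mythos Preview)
Your proof is correct and follows essentially the same route as the paper: both use optimality via duality to pin down $d_{k/r}^{\bot}=(k/r)(r+1)$ and then invoke Lemma~\ref{coro-ghw} to settle the range $1\le i\le k/r$. The only cosmetic difference is in the upper range: the paper observes that $d_1=n-k-k/r+2$ forces the gap numbers of $\mathcal{C}$ to satisfy $g_s=s$ for $1\le s\le n-k-k/r+1$ and then reads off $d_i^{\bot}=k+i$ for all $k/r\le i\le n-k$ at once from the dual of (\ref{gap-number-dual}), whereas your counting argument on the gap numbers of $\mathcal{C}^{\bot}$ followed by upward propagation is simply the same duality fact viewed from the other side.
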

\begin{proof}
 By the duality property (\ref{gap-number-dual}),
\begin{equation}
    d_i^{\bot}  =  (n+1) - g_{n-k-i+1}, \; \; 1 \le i \le n-k.
\end{equation}
Since $d_1 = d = n-k-  k/r + 2$,  $\mathcal{C}$ has gap numbers  $g_s = s$, where $1 \le s \le n-k-  k/r + 1$. Hence by (\ref{gap-number-dual}), it follows that for $k/r \le i \le n-k$, $d_i^{\bot}  =  k + i.$
Since $d_{k/r}^{\bot}  =  k + k/r = \frac{k}{r}(r+1),$ combining Lemma \ref{coro-ghw}, it holds that  for $1 \le i \le k/r$,  $d_i^{\bot}  =  i(r+1).$  Hence the conclusion follows.
\end{proof}

\smallskip
Since the duality properties  (\ref{weight-hierarchy-duality}) and (\ref{gap-number-dual}), we can obtain the following theorem.

\smallskip

\begin{Theorem}\label{thm:weight-hierarchy}
Let $\mathcal{C}$  be an optimal $(n,k,r)$ LRC meeting the Singleton-like bound (\ref{singleton-like-bound})  with $r \mid k$.  The weight hierarchy of $\mathcal{C}$ can be completely determined as
\begin{equation}
  d_i= n-k-\left \lceil \frac{k-i+1}{r} \right \rceil + i + 1,  \; \; 1 \le i \le k.
\end{equation}
\end{Theorem}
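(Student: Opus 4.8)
The plan is to derive the weight hierarchy of $\mathcal{C}$ from that of $\mathcal{C}^{\bot}$, which Theorem~\ref{thm:weight-hierarchy-dual-code} has already pinned down exactly, by passing through the duality identity (\ref{gap-number-dual}). Concretely, since $d_i = (n+1) - g_{k-i+1}^{\bot}$ for $1 \le i \le k$, it suffices to determine all gap numbers $g_j^{\bot}$, $1 \le j \le k$, of the dual code, and then substitute $j = k-i+1$.

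First I would read off the full weight hierarchy of $\mathcal{C}^{\bot}$ from (\ref{weight-hierarchy-dual-code}): it is the set $\{\,m(r+1) : 1 \le m \le k/r\,\} \cup \{\,k + k/r + 1, \ldots, n\,\}$. Using $r \mid k$ and the identity $(k/r)(r+1) = k + k/r$, the part of this set lying in $[1, k + k/r]$ consists precisely of the $k/r$ multiples of $r+1$ in that interval, while every integer in $(k+k/r,\, n]$ already belongs to the weight hierarchy. Hence the gap numbers of $\mathcal{C}^{\bot}$ are exactly the integers of $[1, k+k/r]$ that are not divisible by $r+1$; there are $(k+k/r) - k/r = k$ of them, which is the correct count.

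Next I would locate the $j$-th smallest such integer. The interval $[1, k+k/r]$ splits into the $k/r$ consecutive blocks $\big((m-1)(r+1),\, m(r+1)\big]$, $1 \le m \le k/r$, and each block contains exactly $r$ gap numbers, namely all of its members except the right endpoint $m(r+1)$. Therefore, writing $m = \lceil j/r \rceil$, the $j$-th gap number lies in the $m$-th block at offset $j - (m-1)r$ from its left end, which gives $g_j^{\bot} = (m-1)(r+1) + \big(j - (m-1)r\big) = j + \lceil j/r\rceil - 1$; note this meets the bound of Lemma~\ref{lemma-gap-number} with equality. Substituting $j = k-i+1$ into $d_i = (n+1) - g_{k-i+1}^{\bot}$ then yields $d_i = n - k - \lceil (k-i+1)/r \rceil + i + 1$, as claimed. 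Alternatively, one may observe that the inequality $d_i \le n-k-\lceil (k-i+1)/r\rceil + i+1$ is already supplied by Theorem~\ref{thm-weight-hierarchy} for every LRC, so strictly only the reverse inequality — equivalently, the upper bound $g_{k-i+1}^{\bot} \le (k-i+1) + \lceil(k-i+1)/r\rceil - 1$ — needs the computation above; but handling both directions at once through the explicit list of gap numbers is cleanest.

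The only real obstacle is the combinatorial bookkeeping in the last step: one must verify that the $k/r$ blocks tile $[1, k+k/r]$ without overlap, that exactly one weight-hierarchy element (the right endpoint) falls in each block, and that the offset formula produces the correct $j$-th gap number — in particular getting the first block $[1,\,r+1]$, with gaps $\{1, \ldots, r\}$, right. Everything else is a direct appeal to Theorem~\ref{thm:weight-hierarchy-dual-code} and the duality relation (\ref{gap-number-dual}).
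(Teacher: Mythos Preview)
Your proposal is correct and follows essentially the same route as the paper: invoke Theorem~\ref{thm:weight-hierarchy-dual-code} to obtain the dual weight hierarchy, read off the gap numbers $g_j^{\bot} = j + \lceil j/r\rceil - 1$, and then apply the duality identity (\ref{gap-number-dual}). The only difference is that the paper states the gap-number formula as ``not hard to verify'' without the block-by-block bookkeeping you supply.
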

\begin{proof}
 By Theorem \ref{thm:weight-hierarchy-dual-code}, it is not hard to verify that the gap numbers of the $\mathcal{C}^{\bot}$ are
 \begin{equation}
   g_i^{\bot} = i + \left \lceil \frac{i}{r} \right \rceil  - 1, \quad \mbox{$1 \le i \le k$}.
 \end{equation}
Then by the duality property (\ref{gap-number-dual}), it follows that for $1 \le i \le k$,
\begin{eqnarray*}
   d_i = n+1 - g_{k-i+1}^{\bot} = n-k-\left \lceil \frac{k-i+1}{r} \right \rceil + i + 1,
\end{eqnarray*}
which completes the proof.
\end{proof}

\bigskip
\begin{Remark}
For an $(n,k,r)$ optimal LRC  meeting the Singleton-like bound (\ref{singleton-like-bound})  with $r \mid k$, the $i$-th ($1 \le i \le k$) GHW  attains the  generalized Singleton-like bound (\ref{weight-hierarchy}) with equality.
\end{Remark}

\bigskip
\begin{Example}
For an $(n=12,k=6,r=3)$ optimal cyclic LRC \cite{Tamo-cyclic} $\mathcal{C}$ with  $r \mid k$, by Theorem  \ref{thm:weight-hierarchy-dual-code}, the weight hierarchy of its dual code $\mathcal{C}^{\bot}$  can be completely determined as
\begin{equation*}
  d_1^{\bot} = 4, d_2^{\bot} = 8, d_3^{\bot} = 9, d_4^{\bot} = 10, d_5^{\bot} = 11, d_6^{\bot} = 12.
\end{equation*}
By Theorem  \ref{thm:weight-hierarchy}, the complete weight hierarchy of $\mathcal{C}$ is
\begin{equation*}
  d_1 = 6, d_2 = 7, d_3 = 8, d_4 = 10, d_5 = 11, d_6 = 12.
\end{equation*}
\end{Example}

\subsection{Lower Bounds on the GHWs of Optimal LRCs}
\begin{Lemma} \label{thm-ghw-dual-lower-bound}
Let $\mathcal{C}$  be an optimal $(n,k,r)$ LRC meeting the Singleton-like bound (\ref{singleton-like-bound}) and $\mathcal{C}^{\bot}$ be its dual code. The $i$-th  GHW of the dual code $\mathcal{C}^{\bot}$ satisfies
\begin{eqnarray}\label{weight-hierarchy-dual-code}
  d_i^{\bot} &\ge& i(r+1) - \left \lceil  k/r  \right \rceil r + k, \   \mbox{$1 \le i \le \lceil  k/r  \rceil-1$}, \\
  d_i^{\bot} &=& k + i,       \hspace{2.69cm}  \mbox{$i \ge \lceil  k/r  \rceil$}.
\end{eqnarray}
\end{Lemma}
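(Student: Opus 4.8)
The plan is to pin down the gap numbers of $\mathcal{C}$, convert them into exact values of $d_i^{\bot}$ for the large indices via duality, and then propagate these values downward with Lemma \ref{lemma-ghw-lower-bound-dual} to obtain the lower bound for the small indices.

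First I would record the gap structure of an optimal LRC. Since $\mathcal{C}$ meets the bound (\ref{singleton-like-bound}), we have $d_1 = n-k-\lceil k/r\rceil+2$, and because $1 \le d_1 < d_2 < \cdots < d_k = n$, every integer in $\{1,\dots,d_1-1\}$ must be a gap number of $\mathcal{C}$; equivalently $g_s = s$ for $1 \le s \le n-k-\lceil k/r\rceil+1$. Applying the duality relation (\ref{gap-number-dual}) to $\mathcal{C}^{\bot}$ (whose dual is $\mathcal{C}$ and whose dimension is $n-k$) gives $d_i^{\bot} = (n+1)-g_{n-k-i+1}$ for $1 \le i \le n-k$. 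When $i \ge \lceil k/r\rceil$ the index $n-k-i+1$ is at most $n-k-\lceil k/r\rceil+1$, so $g_{n-k-i+1}=n-k-i+1$, and hence $d_i^{\bot} = (n+1)-(n-k-i+1) = k+i$, which is the second line of the claim. Here I would also invoke \cite{Hao} for $\lceil k/r\rceil \le n-k$, so that $d_{\lceil k/r\rceil}^{\bot} = k+\lceil k/r\rceil$ is a legitimate value of the hierarchy.

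For the first line, I would iterate Lemma \ref{lemma-ghw-lower-bound-dual}, which gives $d_j^{\bot} \ge d_{j+1}^{\bot}-(r+1)$ for $1 \le j \le \lfloor k/r\rfloor$. Since $\lceil k/r\rceil - 1 \le \lfloor k/r\rfloor$, for every $i$ with $1 \le i \le \lceil k/r\rceil-1$ the chain $d_i^{\bot} \ge d_{i+1}^{\bot}-(r+1) \ge \cdots \ge d_{\lceil k/r\rceil}^{\bot} - (\lceil k/r\rceil-i)(r+1)$ is valid, and substituting $d_{\lceil k/r\rceil}^{\bot} = k+\lceil k/r\rceil$ together with the routine simplification $k+\lceil k/r\rceil-(\lceil k/r\rceil-i)(r+1) = i(r+1)-\lceil k/r\rceil r+k$ yields the stated bound.

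The argument is mostly bookkeeping, so the main thing to watch is the index alignment: I must make sure that when $r \nmid k$ the top index $\lceil k/r\rceil-1$ still lies in the range $\{1,\dots,\lfloor k/r\rfloor\}$ where Lemma \ref{lemma-ghw-lower-bound-dual} applies, and that $\lceil k/r\rceil \le n-k$ so $d_{\lceil k/r\rceil}^{\bot}$ is defined and equals $k+\lceil k/r\rceil$. As a sanity check, when $r\mid k$ the bound collapses to $d_i^{\bot} \ge i(r+1)$, which is consistent with — and weaker than the exact value given by — Theorem \ref{thm:weight-hierarchy-dual-code}.
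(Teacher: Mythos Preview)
Your proposal is correct and follows essentially the same route as the paper: use $d_1 = n-k-\lceil k/r\rceil+2$ to pin down the small gap numbers of $\mathcal{C}$, convert them via duality into $d_i^{\bot}=k+i$ for $i\ge\lceil k/r\rceil$, and then iterate Lemma~\ref{lemma-ghw-lower-bound-dual} downward from $d_{\lceil k/r\rceil}^{\bot}=k+\lceil k/r\rceil$ to get the lower bound for $1\le i\le\lceil k/r\rceil-1$. Your explicit check that $\lceil k/r\rceil-1\le\lfloor k/r\rfloor$ and that $\lceil k/r\rceil\le n-k$ is a nice addition the paper leaves implicit.
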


\begin{proof}
Since  the $1$st GHW of $\mathcal{C}$ is $d_1 = n-k-\lceil k/r \rceil + 2$, $g_{n-k-\lceil k/r \rceil+1} = n-k-\lceil k/r \rceil+1$.
Then By  (\ref{gap-number-dual}),
\begin{equation}\label{lemma-lb-eq1}
  d_{\lceil k/r \rceil}^{\bot} = (n+1) - g_{n-k-\lceil k/r \rceil+1} = k + \lceil k/r \rceil.
\end{equation}
Hence for $ \lceil k/r \rceil \le i \le n-k$,   $d_i^{\bot} = k + i$. In the meantime, By Lemma \ref{lemma-ghw-lower-bound-dual}, It follows that for $1 \le i \le \lfloor  k/r  \rfloor$,
\begin{equation}\label{lemma-lb-eq2}
  d_{i}^{\bot} \ge d_{i+1}^{\bot} - (r+1).
\end{equation}
Combing (\ref{lemma-lb-eq1}) and (\ref{lemma-lb-eq2}), it follows that for  $1 \le i \le \lceil  k/r  \rceil - 1$,
\begin{eqnarray}
  d_{i}^{\bot}  \ge d_{\lceil k/r \rceil}^{\bot} - (\lceil k/r \rceil-i)(r+1) = i(r+1)  - \left \lceil  k/r  \right \rceil r + k, \nonumber
\end{eqnarray}
which completes the proof.
\end{proof}

\smallskip
\begin{Remark}
Combining Lemma \ref{thm-ghw-upper-bound-dual}, \ref{thm-ghw-dual-lower-bound} with the condition $r \mid k$, we can also obtain the Theorem \ref{thm:weight-hierarchy-dual-code}, which appears to be another proof of the Theorem \ref{thm:weight-hierarchy-dual-code}.
\end{Remark}

\smallskip
\begin{Lemma} \label{lemma-gap-number-upper-bound}
Let $\mathcal{C}$  be an optimal $(n,k,r)$ LRC meeting the Singleton-like bound (\ref{singleton-like-bound}) and $\mathcal{C}^{\bot}$ be its dual code. Then the gap numbers of the $\mathcal{C}^{\bot}$ satisfy
 \begin{equation}\label{gap-number-bound}
   g_i^{\bot} \le \left \lceil \frac{i+(\lceil \frac{k}{r} \rceil r-k)}{r} \right \rceil + i - 1, \quad \mbox{$1 \le i \le k$}.
\end{equation}
\end{Lemma}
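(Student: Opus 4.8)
The plan is to mirror the inductive structure used in the proof of Lemma~\ref{lemma-gap-number}, but now running the induction so as to track an \emph{upper} bound on $g_i^{\bot}$ rather than a lower one, and feeding in the extra information coming from optimality of $\mathcal{C}$. First I would establish the base case $i=1$: since $\mathcal{C}$ is an LRC, its dual contains a parity-check equation of weight at most $r+1$, so $d_1^{\bot}\le r+1 \le k+1$, whence $g_1^{\bot}$ is not forced to be large; in fact one checks directly that $g_1^{\bot}\le \lceil (1+(\lceil k/r\rceil r-k))/r\rceil$, which since $0\le \lceil k/r\rceil r-k \le r-1$ gives either $1$ or $2$ on the right, and $g_1^{\bot}\le 2$ is immediate (the first gap is at most $2$ because $d_1^{\bot}\le 2$ would be needed to force $g_1^{\bot}\ge 3$, contradicting $d_1^{\bot}\le r+1$ only when $r=0$). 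I would double-check the degenerate small cases here.

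Next I would carry out the inductive step. Assume the bound $g_i^{\bot}\le \lceil (i+(\lceil k/r\rceil r-k))/r\rceil + i - 1$ and split into two cases according to whether the ceiling $\lceil (i+1+(\lceil k/r\rceil r-k))/r\rceil$ equals $\lceil (i+(\lceil k/r\rceil r-k))/r\rceil$ or exceeds it by one — exactly the case split of Lemma~\ref{lemma-gap-number}. In the first case, I would use the trivial fact that consecutive gap numbers increase by at least $1$, i.e.\ $g_{i+1}^{\bot}\ge g_i^{\bot}+1$ is the wrong direction, so instead I need an \emph{upper} bound relating $g_{i+1}^{\bot}$ to $g_i^{\bot}$. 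The right tool is Lemma~\ref{lemma-ghw-lower-bound-dual}, which says $d_{j+1}^{\bot}\le d_j^{\bot}+(r+1)$; translated through the duality formula (\ref{gap-number-dual}) and the interleaving of weight hierarchy and gap numbers, a jump of at most $r+1$ in the dual GHWs over one step means that in the stretch of $r+1$ consecutive integers following $d_j^{\bot}$ at least one more weight-hierarchy value appears, hence at most $r$ new gap numbers can appear, i.e.\ over any window where $r+1$ integers are added, the gap count rises by at most $r$. Unwinding this gives $g_{i+1}^{\bot}\le g_i^{\bot}+1$ whenever a new dual GHW has just been ``used up,'' and $g_{i+1}^{\bot}\le g_i^{\bot}+?$ otherwise; reconciling this with which case we are in is the bookkeeping.

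In the second case, where the ceiling increments, I would use Lemma~\ref{thm-ghw-dual-lower-bound}: for the appropriate index $a$ we have the \emph{lower} bound $d_a^{\bot}\ge a(r+1)-\lceil k/r\rceil r + k$, which says that the first $a(r+1)-\lceil k/r\rceil r+k$ integers contain at least the $a$ values $d_1^{\bot},\dots,d_a^{\bot}$, hence at most $\bigl(a(r+1)-\lceil k/r\rceil r+k\bigr)-a$ gap numbers, and therefore $g_{i+1}^{\bot}$ exceeds that threshold — this pins $g_{i+1}^{\bot}$ from below the claimed bound, i.e.\ forces the claimed inequality. Concretely, with $i+1$ sitting at the value where $\lceil (i+1+(\lceil k/r\rceil r-k))/r\rceil$ jumps, one shows $g_{i+1}^{\bot}\le \bigl(a(r+1)-\lceil k/r\rceil r+k\bigr) - a + \text{(small correction)}$ and matches it against the RHS of (\ref{gap-number-bound}). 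Assembling the two cases completes the induction.

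I expect the main obstacle to be the second case, and more precisely the translation between the additive Lemma~\ref{lemma-ghw-lower-bound-dual}/Lemma~\ref{thm-ghw-dual-lower-bound} statements about dual GHWs and the ceiling-function form of the gap-number bound: one must verify that the ``$+(\lceil k/r\rceil r - k)$'' offset in the numerator is exactly the slack needed so that the two cases glue consistently, i.e.\ that when $r\mid k$ this recovers $g_i^{\bot}=i+\lceil i/r\rceil-1$ of Theorem~\ref{thm:weight-hierarchy} and otherwise interpolates correctly. Handling the arithmetic of $\lceil k/r\rceil r - k \in \{0,1,\dots,r-1\}$ cleanly — rather than in a mess of floor/ceiling identities — and making sure the induction index does not run past $k$ (so that all invoked lemmas apply in their stated ranges $1\le i\le \lfloor k/r\rfloor$) will be the delicate points.
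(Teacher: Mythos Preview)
Your plan mirrors the paper's two-case split, but the induction runs the \emph{wrong direction}, and this is not cosmetic. The paper inducts \emph{downward} from $i=k$: the base case $g_k^{\bot}=(n+1)-d_1=k+\lceil k/r\rceil-1$ is exactly where optimality enters, and Case~1 is then just the trivial monotonicity $g_{i-1}^{\bot}\le g_i^{\bot}-1$ (gap numbers are strictly increasing). In your upward scheme, Case~1 would need $g_{i+1}^{\bot}\le g_i^{\bot}+1$, which you yourself flag as unavailable. Your proposed rescue via Lemma~\ref{lemma-ghw-lower-bound-dual} does not work: the bound $d_{j+1}^{\bot}\le d_j^{\bot}+(r+1)$ limits the length of a \emph{run} of consecutive gap numbers (at most $r$ between successive dual GHWs), but it says nothing about $g_{i+1}^{\bot}-g_i^{\bot}$, since many dual GHWs may cluster together and push consecutive gap numbers far apart. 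There is no upward step here to be had.

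Your Case~2 sketch also has the containment logic inverted. The \emph{lower} bound $d_a^{\bot}\ge a(r+1)-\lceil k/r\rceil r+k$ from Lemma~\ref{thm-ghw-dual-lower-bound} says that $\{1,\dots,a(r+1)-(\lceil k/r\rceil r-k)-1\}$ contains at \emph{most} $a-1$ dual GHWs, hence at \emph{least} $ar-(\lceil k/r\rceil r-k)$ gap numbers, giving the \emph{upper} bound $g^{\bot}_{ar-(\lceil k/r\rceil r-k)}\le a(r+1)-(\lceil k/r\rceil r-k)-1$. You wrote it as ``contain at least the $a$ values $d_1^{\bot},\dots,d_a^{\bot}$, hence at most \dots\ gap numbers, therefore $g_{i+1}^{\bot}$ exceeds that threshold,'' which would yield a lower bound on $g_{i+1}^{\bot}$ --- the opposite of what is needed. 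Once the direction is corrected, note that this anchor sits at the \emph{top} index of each ceiling level and propagates naturally \emph{downward} via $g_{i-1}^{\bot}\le g_i^{\bot}-1$; reversing the induction direction fixes both of your cases simultaneously.
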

\begin{proof}
The lemma is proved by induction. For $i=k$, it follows that $g_k^{\bot} =(n+1)-d_1= k+\lceil k/r \rceil-1$.
Suppose that for $2 \le i \le k$, $g_i^{\bot} \le \left \lceil \frac{i+(\lceil \frac{k}{r} \rceil r-k)}{r} \right \rceil + i - 1$. Then consider the gap number $g_{i-1}^{\bot}$,
\smallskip
\begin{enumerate}
  \item If $\left \lceil \frac{i-1+(\lceil \frac{k}{r} \rceil r-k)}{r} \right \rceil = \left \lceil \frac{i+(\lceil \frac{k}{r} \rceil r-k)}{r} \right \rceil $, we have $g_{i-1}^{\bot} \le g_{i}^{\bot}-1 = \left \lceil \frac{i-1+(\lceil \frac{k}{r} \rceil r-k)}{r} \right \rceil + (i-1) - 1$.
  \smallskip
  \smallskip
  \item  If $\left \lceil \frac{i-1+(\lceil \frac{k}{r} \rceil r-k)}{r} \right \rceil = \left \lceil \frac{i+(\lceil \frac{k}{r} \rceil r-k)}{r} \right \rceil-1$, i.e., $(i-1)+(\lceil \frac{k}{r} \rceil r-k)=ar, i+(\lceil \frac{k}{r} \rceil r-k)=ar+1$. By Lemma \ref{thm-ghw-dual-lower-bound}, $d_{\lceil \frac{k}{r} \rceil}^{\bot} = k+\lceil \frac{k}{r} \rceil$ and for $1 \le i \le \lceil \frac{k}{r} \rceil-1$, $d_i^{\bot} \ge i(r+1) - \left \lceil \frac{k}{r} \right \rceil r + k.$  Hence the set $\{1,\cdots, a(r+1)-(\left \lceil \frac{k}{r} \right \rceil r - k)-1\}$ contains at most $\{d_1^{\bot},\cdots, d_{a-1}^{\bot}\}$. In other words, the  set $\{1,\cdots, a(r+1)-(\left \lceil \frac{k}{r} \right \rceil r - k)-1\}$ contains at least these gap numbers $\{g_1^{\bot} \cdots g_{ar-(\left \lceil \frac{k}{r} \right \rceil r - k)}^{\bot}\}$. Then it follows that  $g_{ar-(\left \lceil \frac{k}{r} \right \rceil r - k)}^{\bot} \le a(r+1)-(\left \lceil \frac{k}{r} \right \rceil r - k)-1$, i.e., $g_{i-1}^{\bot} \le \left \lceil \frac{i-1+(\lceil \frac{k}{r} \rceil r-k)}{r} \right \rceil  + (i-1)-1$.
\end{enumerate}
Combining the above two cases, the conclusion follows.
\end{proof}
\smallskip
\begin{Theorem} \label{thm-ghw-lower-bound}
Let  $\mathcal{C}$ be an optimal $(n,k,r)$ LRC meeting the Singleton-like bound (\ref{singleton-like-bound}), its $i$-th ($1 \le i \le k$) GHW
\begin{equation}
d_i \ge n-k-\left \lceil \frac{\lceil \frac{k}{r} \rceil r-i+1}{r} \right \rceil + i + 1.
\end{equation}
\end{Theorem}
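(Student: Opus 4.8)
The plan is to proceed exactly as in the proof of Theorem~\ref{thm-weight-hierarchy}, but now using the \emph{upper} bound on the gap numbers of $\mathcal{C}^{\bot}$ supplied by Lemma~\ref{lemma-gap-number-upper-bound} in place of the lower bound of Lemma~\ref{lemma-gap-number}. First I would invoke the duality relation~(\ref{gap-number-dual}), which for each $i$ with $1 \le i \le k$ reads $d_i = (n+1) - g_{k-i+1}^{\bot}$; since $1 \le k-i+1 \le k$ in this range, Lemma~\ref{lemma-gap-number-upper-bound} applies to the index $j = k-i+1$.

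Next I would substitute the bound $g_{k-i+1}^{\bot} \le \left\lceil \frac{(k-i+1)+(\lceil k/r\rceil r - k)}{r}\right\rceil + (k-i+1) - 1$ into that identity and simplify. The only computation is to observe that the argument of the ceiling collapses, $(k-i+1) + (\lceil k/r \rceil r - k) = \lceil k/r \rceil r - i + 1$, after which the linear terms combine to give precisely
\begin{equation*}
d_i \ge n - k - \left\lceil \frac{\lceil k/r \rceil r - i + 1}{r} \right\rceil + i + 1,
\end{equation*}
as claimed. One could equivalently phrase the same argument by feeding the dual-code lower bounds of Lemma~\ref{thm-ghw-dual-lower-bound} directly into the weight-hierarchy duality~(\ref{weight-hierarchy-duality}); the two routes are interchangeable.

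I do not expect a genuine obstacle here: the substantive content — the inductive upper bound on $g_i^{\bot}$, which rests on Lemma~\ref{thm-ghw-dual-lower-bound} and hence on the optimality hypothesis $d_1 = n-k-\lceil k/r\rceil+2$ — has already been established, so this theorem is essentially a corollary of Lemma~\ref{lemma-gap-number-upper-bound} and the duality~(\ref{gap-number-dual}). The only points worth double-checking are bookkeeping: that the index shift $j = k-i+1$ stays in the valid range $1 \le j \le k$ for all $1 \le i \le k$, and that when $r \mid k$ the bound here coincides with the exact weight hierarchy of Theorem~\ref{thm:weight-hierarchy} (because then $\lceil k/r\rceil r - k = 0$), which serves as a consistency sanity check.
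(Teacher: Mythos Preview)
Your proposal is correct and follows exactly the paper's own argument: invoke the duality $d_i = (n+1) - g_{k-i+1}^{\bot}$ from~(\ref{gap-number-dual}), apply Lemma~\ref{lemma-gap-number-upper-bound} at the index $k-i+1$, and simplify. The paper presents precisely this chain of (in)equalities, with the same simplification of the ceiling argument.
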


\begin{proof}
For $1 \le i \le k$, the $i$-th GHW satisfies
\begin{eqnarray*}
  d_i &\overset{(a)}{=}& n+1 - g_{k-i+1}^{\bot} \\
      &\overset{(b)}{\ge}& n+1 - \left (\left \lceil \frac{k-i+1+\lceil \frac{k}{r} \rceil r-k}{r} \right \rceil + k-i \right) \\
      &=& n-k-\left \lceil \frac{\lceil \frac{k}{r} \rceil r-i+1}{r} \right \rceil + i + 1,
\end{eqnarray*}
where ($a$) follows from  (\ref{gap-number-dual}) and ($b$) follows from Lemma \ref{lemma-gap-number-upper-bound}.
\end{proof}

\smallskip
\begin{Remark}
Combining Theorem \ref{thm-weight-hierarchy}, \ref{thm-ghw-lower-bound} with the condition $r \mid k$, Theorem \ref{thm:weight-hierarchy} can also be obtained.
\end{Remark}

\section{Bounds of Linear Codes in Terms of the GHWs} \label{sec:general-bound}
In this section, we employ a parity-check matrix approach \cite{Hao} to present two general bounds of  linear codes in terms of the GHWs. The general bounds take the field size into account and can be used to derive several known bounds of  LRCs. It is also shown that the result on the bounds of LRCs in \cite{Tamo-matroid} can  be explained in terms of the GHWs. Some result in \cite{Wang:IT} can be further refined by using the GHWs.

\smallskip
\begin{Proposition} \label{prop:general-d-ghw}
Let $\mathcal{C}$ be an $[n,k,d]$ linear code and $\mathcal{C}^{\bot}$ be its dual code. The minimum distance of $\mathcal{C}$  satisfies
\begin{equation} \label{label:general-bound-ghw}
d \le \min_{ 1 \leq i \leq g_k^{\bot} -k} \; d^{(q)}_{\rm opt}(n  - d_i^{\bot},k+i- d_i^{\bot}),
\end{equation}
where $d_{\text{opt}}^{(q)}(n^*,k^*)$ is the largest possible minimum distance of a $q$-ary linear code with length $n^*$ and dimension  $k^*$, and $d_i^{\bot}$ is the $i$-h GHW of $\mathcal{C}^{\bot}$.
\end{Proposition}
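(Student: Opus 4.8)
The plan is to use the parity-check matrix approach of \cite{Hao}: fix an index $i$ with $1 \le i \le g_k^{\bot} - k$ and select a subcode $\mathcal{D}^{\bot} \subseteq \mathcal{C}^{\bot}$ of dimension $i$ achieving $|{\rm supp}(\mathcal{D}^{\bot})| = d_i^{\bot}$. Take a basis of $\mathcal{D}^{\bot}$ and view it as an $i \times n$ matrix whose nonzero columns all lie in a coordinate set $S$ with $|S| = d_i^{\bot}$. Extend this to a full rank $(n-k)\times n$ parity-check matrix $H$ of $\mathcal{C}$ by appending $n-k-i$ further independent rows from $\mathcal{C}^{\bot}$. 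I would then restrict attention to the $n - d_i^{\bot}$ columns of $H$ indexed by $[n]\setminus S$: on these columns the first $i$ rows vanish, so the nonzero entries of these columns are confined to the lower $n-k-i$ rows. Deleting those zero rows yields an $(n-k-i) \times (n - d_i^{\bot})$ matrix $H'$, which is a parity-check matrix of a (shortened/punctured) $q$-ary linear code $\mathcal{C}'$ of length $n' = n - d_i^{\bot}$ and dimension $k' = n' - {\rm rank}(H') \ge (n - d_i^{\bot}) - (n-k-i) = k + i - d_i^{\bot}$.

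The key linking step is that the minimum distance of $\mathcal{C}$ is bounded by the minimum distance of this auxiliary code. Concretely, any codeword of $\mathcal{C}'$ — i.e., any linear dependence among columns of $H'$ supported on $[n]\setminus S$ — extends (by zeros on $S$) to a linear dependence among the columns of $H$, hence to a codeword of $\mathcal{C}$ of the same weight; this gives $d = d_1(\mathcal{C}) \le d_1(\mathcal{C}')$. Since $\mathcal{C}'$ is a $q$-ary code with parameters $[n', \ge k']$, its minimum distance is at most $d_{\rm opt}^{(q)}(n', k')$ — here one uses that $d_{\rm opt}^{(q)}(n^*, k^*)$ is nonincreasing in $k^*$, so the bound with the lower bound $k' = k+i-d_i^{\bot}$ on the dimension still dominates. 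Taking the minimum over all admissible $i$ yields \eqref{label:general-bound-ghw}.

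The range of the index $i$ needs justification: we need $k + i - d_i^{\bot} \ge 1$ so that $\mathcal{C}'$ is a nontrivial code (and $n - d_i^{\bot} \ge k+i-d_i^{\bot}$, equivalently $i \le n-k$, which holds since $g_k^{\bot}\le n$). By \eqref{g_k_1}, $g_k^{\bot} = \max\{k+j : 1\le j\le n-k,\ d_j^{\bot} < k+j\}$, so for every $j$ with $k+j \le g_k^{\bot}$, i.e. $j \le g_k^{\bot}-k$, we have $d_j^{\bot} < k+j$, which is exactly $k + j - d_j^{\bot} \ge 1$. Hence the index set $1 \le i \le g_k^{\bot}-k$ is precisely the set on which the construction produces a genuine code, and the argument goes through for each such $i$.

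The main obstacle I anticipate is the bookkeeping around ranks and dimensions: one must be careful that appending rows to the basis of $\mathcal{D}^{\bot}$ keeps $H$ full rank (possible since ${\rm dim}\,\mathcal{C}^{\bot} = n-k \ge i$ when $i \le n-k$), and that after deleting the zero rows the residual matrix $H'$ has rank at most $n-k-i$ (it could be less, which only helps, as it makes $k'$ larger and the bound weaker in a way already absorbed by the monotonicity of $d_{\rm opt}^{(q)}$). A secondary subtlety is making the "dependence on $[n]\setminus S$ lifts to a codeword of $\mathcal{C}$" step rigorous — this is where the supports being confined to $S$ for the first $i$ rows is used essentially, and it is the crux of why passing to the subcode $\mathcal{D}^{\bot}$ of minimal support is the right move.
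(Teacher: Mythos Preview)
Your proposal is correct and follows essentially the same approach as the paper's own proof: choose a dimension-$i$ subcode $\mathcal{D}^{\bot}\subseteq\mathcal{C}^{\bot}$ with minimal support $d_i^{\bot}$, extend a basis to a full parity-check matrix, delete the $i$ rows and the $d_i^{\bot}$ covered columns, and bound $d$ by the minimum distance of the residual code of length $n-d_i^{\bot}$ and dimension at least $k+i-d_i^{\bot}$. Your treatment of the rank bookkeeping, the monotonicity of $d_{\rm opt}^{(q)}$ in the dimension argument, and the justification of the index range via \eqref{g_k_1} are all in line with (and in places more explicit than) the paper's argument.
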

\begin{proof}
By (\ref{g_k_1}), we obtain that
\begin{equation}\label{g_k}
  g_k^{\bot} -k = {\rm{max}}\{i: 1\le i \le n-k, d_i^{\bot} < k+i\}.
\end{equation}
In other words, for any $1\le i \le g_k^{\bot} -k$, it follows that $ k+i-d_i^{\bot}  > 0$. Now consider a subcode $\mathcal{D}_i^{\bot}$ of $\mathcal{C}^{\bot}$  with the support $|{\rm supp}(\mathcal{D}_{i}^{\bot})|=d_i^{\bot}$. Let $H_1=[(\mathbf{h}_1^T, \cdots \mathbf{h}_i^T)^T]$ be a basis of  the subcode $\mathcal{D}_i^{\bot}$ with $i$ independent parity-check equations from $\mathcal{D}^{\bot}_i$.

Now let us select $n-k$ independent parity-check equations from $\mathcal{C}^{\bot}$ to construct a parity-check matrix $H=(H_1^T, H_2^T)^T$ of $\mathcal{C}$, where the first $i$ parity-check equations are the vectors from $H_1$. The lower part $H_2$ consists of other $n-k-i$ independent parity-check equations form $\mathcal{C}^{\bot}$.
The first $i$ rows in $H_1$ cover $|{\rm supp}(\mathcal{D}_{i}^{\bot})|$ columns of $H$. By deleting the first $i$ rows and the corresponding $|{\rm supp}(\mathcal{D}_{i}^{\bot})|$ columns of $H$, we have an $m^*\times n^*$ sub-matrix $H^*$, where $m^*= n-k-i$ and $n^* = n - |{\rm supp}(\mathcal{D}_{i}^{\bot})| = n - d_i^{\bot}$. Let $\mathcal{C}^*$ be the $[n^*,k^*,d^*]$ linear code with parity-check matrix $H^*$.
Among the $n^*$ columns of $H$, since the elements lies above $H^*$ are all zero, $d \leq d^*$.
Moreover, by ${\rm rank}(H^*)\leq  n-k-i$, we have  $k^* = n^*-{\rm rank}(H^*) \geq   k+i - d_i^{\bot}>0$. Hence,
\begin{eqnarray}
 d\;\le \;d^*  \;\leq\; d^{(q)}_{\rm opt}(n^*,k^*) \;\leq\; d^{(q)}_{\rm opt}(n - d_i^{\bot},k+i - d_i^{\bot}). \label{C*}
\end{eqnarray}
Since  $1 \leq i \leq  g_k^{\bot} -k$, the conclusion follows.
\end{proof}

\smallskip
In the proof of Proposition $\ref{prop:general-d-ghw}$, the next result follows by substituting (\ref{C*}) by
\begin{eqnarray}
k\;\le \;k^*-i+d_i^{\bot}  \;&\leq&\; k^{(q)}_{\rm opt}(n^*,d^*)-i+d_i^{\bot} \nonumber \\
   &\leq& \; k^{(q)}_{\rm opt}(n -d_i^{\bot},d)-i+d_i^{\bot}.
\end{eqnarray}

\begin{Proposition}\label{prop:general-k-ghw}
Let $\mathcal{C}$ be an $[n,k,d]$ linear code and $\mathcal{C}^{\bot}$ be its dual code. The dimension of $\mathcal{C}$ satisfies
\begin{equation*}\label{label:general-bound-k}
k \le \min_{ 1 \leq j \leq g_k^{\bot} -k} \left[ k^{(q)}_{\rm opt}(n -d_i^{\bot},d)-i+d_i^{\bot} \right],
\end{equation*}
where $k_{\text{opt}}^{(q)}(n^*,d^*)$ is the largest possible dimension of a $q$-ary linear code with length $n^*$ and minimum distance  $d^*$ and $d_i^{\bot}$ is the $i$-h GHW of $\mathcal{C}^{\bot}$.
\end{Proposition}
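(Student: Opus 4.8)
The plan is to reuse, essentially verbatim, the parity-check matrix construction from the proof of Proposition \ref{prop:general-d-ghw}. First I would fix an index $i$ with $1 \le i \le g_k^{\bot}-k$, so that by (\ref{g_k}) we have $k+i-d_i^{\bot} > 0$. As in that proof, choose a subcode $\mathcal{D}_i^{\bot}$ of $\mathcal{C}^{\bot}$ with $|{\rm supp}(\mathcal{D}_i^{\bot})| = d_i^{\bot}$, let $H_1$ be a basis of $\mathcal{D}_i^{\bot}$ consisting of $i$ independent parity-check equations, and extend it to a full parity-check matrix $H=(H_1^T,H_2^T)^T$ of $\mathcal{C}$. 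Deleting the $i$ rows of $H_1$ and the $d_i^{\bot}$ columns they cover produces an $m^*\times n^*$ submatrix $H^*$ with $m^* = n-k-i$ and $n^* = n-d_i^{\bot}$, which is the parity-check matrix of a $q$-ary code $\mathcal{C}^*$ with length $n^*$, dimension $k^* = n^* - {\rm rank}(H^*) \ge k+i-d_i^{\bot} > 0$, and minimum distance $d^* \ge d$ (the latter because the deleted rows carry only zeros above $H^*$, exactly as argued for (\ref{C*})).

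The only change is in the final estimate: instead of bounding $d$ through $d^*$, I would bound $k$ through $k^*$. Rearranging $k^* \ge k+i-d_i^{\bot}$ gives $k \le k^* - i + d_i^{\bot}$. Since $\mathcal{C}^*$ has length $n^*$ and minimum distance $d^* \ge d$, and $k^{(q)}_{\rm opt}(n^*,\cdot)$ is non-increasing in its second argument, we get $k^* \le k^{(q)}_{\rm opt}(n^*,d^*) \le k^{(q)}_{\rm opt}(n^*,d) = k^{(q)}_{\rm opt}(n-d_i^{\bot},d)$. Chaining these inequalities yields $k \le k^{(q)}_{\rm opt}(n-d_i^{\bot},d) - i + d_i^{\bot}$ for each admissible $i$, and taking the minimum over $1 \le i \le g_k^{\bot}-k$ gives the claimed bound; note that the index in the $\min$ in the statement should read $i$ rather than $j$.

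There is no real obstacle here: this is precisely the dual statement whose derivation is already sketched in the remark immediately preceding the proposition, obtained by substituting the alternative chain of inequalities for (\ref{C*}). The only points deserving explicit mention are that the restriction $1 \le i \le g_k^{\bot}-k$ is exactly what guarantees $k^* \ge k+i-d_i^{\bot} > 0$, so that $\mathcal{C}^*$ is a genuine nonzero linear code and $k^{(q)}_{\rm opt}$ is meaningfully applied, and that the monotonicity $d^* \ge d \Rightarrow k^{(q)}_{\rm opt}(n^*,d^*) \le k^{(q)}_{\rm opt}(n^*,d)$ is the feature of $k^{(q)}_{\rm opt}$ being used. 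Everything else is identical to the proof of Proposition \ref{prop:general-d-ghw}.
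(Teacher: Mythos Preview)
Your proposal is correct and follows exactly the paper's own approach: the paper proves this proposition simply by reusing the parity-check matrix construction of Proposition~\ref{prop:general-d-ghw} and substituting the chain $k \le k^* - i + d_i^{\bot} \le k^{(q)}_{\rm opt}(n^*,d^*) - i + d_i^{\bot} \le k^{(q)}_{\rm opt}(n - d_i^{\bot}, d) - i + d_i^{\bot}$ for (\ref{C*}), which is precisely what you do. Your observations about the role of the constraint $1\le i\le g_k^{\bot}-k$, the monotonicity of $k^{(q)}_{\rm opt}$ in its second argument, and the $j$-versus-$i$ typo are all accurate and only make the argument more explicit than the paper's sketch.
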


\smallskip
\begin{Remark}
As for $(n,k,r)$ LRCs,  By Lemma \ref{lemma-gap-number}, $g_k^{\bot} \ge \left \lceil k/r \right \rceil + k - 1$. Hence $g_k^{\bot} -k \ge \left \lceil k/r  \right \rceil  - 1$. By Lemma \ref{thm-ghw-upper-bound-dual}, for $1 \le i \le \lfloor  k/r  \rfloor$,  $d_i^{\bot} \le i(r+1)$. In the above proof, for $1 \le i \le \left \lceil  k/r  \right \rceil  - 1$, if we delete $i(r+1)-|{\rm supp}(\mathcal{D}_{i}^{\bot})|$ columns more, then $n^* = n - i(r+1)$. In this case Proposition \ref{prop:general-d-ghw} gives the following bound \cite{Hao}
\begin{equation} \label{label:general-bound-ghw}
d \le \min_{ 1 \leq i \leq \lceil \frac{k}{r} \rceil -1 } \; d^{(q)}_{\rm opt}(n  - i(r+1),k-ir).
\end{equation}
where $d_{\text{opt}}^{(q)}(n^*,k^*)$ is the largest possible minimum distance of a $q$-ary linear code with length $n^*$ and dimension  $k^*$.
Similarly, Proposition \ref{prop:general-k-ghw} gives the Cadambe-Mazumdar bound \cite{Cadambe:IT}
\begin{equation}\label{cm-bound}
k \le \min_{  1 \leq i \leq \lceil \frac{k}{r} \rceil -1} \Big[ ir + k^{(q)}_{\rm opt}(n  - i(r+1),d) \Big],
\end{equation}
where $k_{\text{opt}}^{(q)}(n^*,d^*)$ is the largest possible dimension of an $n^*$-length code  given the alphabet size $q$ and distance $d^*$,
\end{Remark}

\bigskip

For a linear code $\mathcal{C}$, by (\ref{g_k_1})(\ref{g_k_2}), the $k$-th gap number of  $\mathcal{C}^{\bot}$
\begin{eqnarray}
  g_k^{\bot} = k+ {\rm{max}}\{i: 1\le i \le n-k, d_i^{\bot} < k+i\},  \hspace{0.8cm} \label{g_k_3}  \\
    = k+ {\rm{min}}\{i: 1\le i \le n-k, d_i^{\bot} = k+i\} -1. \hspace{0.33cm}  \label{g_k_4}
\end{eqnarray}
Hence
\begin{eqnarray}
  d  = n+1 - g_k^{\bot} \hspace{5.65cm}  \nonumber\\
   \hspace{0.5cm}  =n-k - {\rm{max}}\{i: 1\le i \le n-k, d_i^{\bot} < k+i\} +1, \label{g_k_5}  \\
    =n- k - {\rm{min}}\{i: 1\le i \le n-k, d_i^{\bot} = k+i\}+2.\hspace{0.12cm} \label{g_k_6}
\end{eqnarray}
In the following, we will show that the results on the minimum distances of linear codes in \cite{Tamo-matroid}\cite{Wang:IT} can be  explained or refined by using the properties (\ref{g_k_5})(\ref{g_k_6}) of the gap number  $g_k^{\bot}$.

\smallskip
Tamo \emph{et al.} introduced a matroid approach to study the optimality of the minimum distance for a linear code in \cite{Tamo-matroid}.  Let $G$ be the $k \times n$ generator matrix of a linear code $\mathcal{C}$. Define the matroid $\mathcal{M}([n],{\rm rank}(.))$, the rank of a set $\mathcal{A} \ \subseteq [n]$ is ${\rm rank}(\mathcal{A}) = {\rm rank} (G_{\mathcal{A}})$, where $G_{\mathcal{A}}$ is the submatrix of $G$ indexed by $\mathcal{A}$ and rank is the normal operator on linear vectors. A  set $\mathcal{B} \ \subseteq [n]$ is said to be a \emph{circuit} if  ${\rm rank}(\mathcal{B}) =|\mathcal{B}|-1$ and all its proper subsets are independent. Tamo \emph{et al.} introduced the notion of \emph{nontrivial union} of circuits and proposed the following result on the minimum distance of linear codes.
\begin{Proposition} \label{prop:tamo}
\cite[Theorem 2]{Tamo-matroid} Let $G$, $\mathcal{M}$ and $\mu$ be defined as above, then $\mathcal{C}$  has the minimum distance
\begin{equation}\label{tamo-thm}
  d = n-k-\mu + 2,
\end{equation}
where $\mu$ is the minimum positive integer such that the size of every nontrivial union of $\mu$ circuits in $\mathcal{M}$ is at least $k + \mu$.
\end{Proposition}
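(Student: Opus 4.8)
The plan is to connect the matroid quantity $\mu$ in Proposition~\ref{prop:tamo} to the gap number $g_k^{\bot}$ via the identity $d = n-k-g_k^{\bot}+k+1 = n+1-g_k^{\bot}$ together with the two characterizations (\ref{g_k_5}) and (\ref{g_k_6}). Comparing (\ref{tamo-thm}) with (\ref{g_k_5}), it suffices to show that
\begin{equation*}
\mu = \max\{i : 1 \le i \le n-k,\ d_i^{\bot} < k+i\} = g_k^{\bot}-k,
\end{equation*}
or equivalently, via (\ref{g_k_6}), that $\mu = \min\{i : 1 \le i \le n-k,\ d_i^{\bot} = k+i\}-1$. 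So the heart of the argument is a dictionary between \emph{nontrivial unions of circuits} in the matroid $\mathcal{M}$ of $\mathcal{C}$ and \emph{subcodes of $\mathcal{C}^{\bot}$}, under which a nontrivial union of $\mu$ circuits of size $s$ corresponds to a subcode $\mathcal{D}^{\bot} \subseteq \mathcal{C}^{\bot}$ of dimension $\mu$ with $|{\rm supp}(\mathcal{D}^{\bot})| = s$.

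First I would make this correspondence precise. If $\mathcal{A} \subseteq [n]$ with ${\rm rank}(G_{\mathcal{A}}) = |\mathcal{A}| - j$, then the columns of $G$ indexed by $\mathcal{A}$ satisfy exactly $j$ independent linear dependencies, i.e.\ there is a $j$-dimensional subcode of $\mathcal{C}^{\bot}$ whose every codeword is supported inside $\mathcal{A}$; and ``nontrivial'' (in the sense of \cite{Tamo-matroid}, meaning the union of circuits is not a disjoint union, equivalently $\mathcal{A}$ is the support of that subcode and cannot be split) forces $\mathcal{A}$ to equal ${\rm supp}(\mathcal{D}^{\bot})$ for the subcode $\mathcal{D}^{\bot}$ it generates. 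Conversely, any subcode $\mathcal{D}^{\bot} \subseteq \mathcal{C}^{\bot}$ of dimension $i$ gives a set $\mathcal{A} = {\rm supp}(\mathcal{D}^{\bot})$ with ${\rm rank}(G_{\mathcal{A}}) = |\mathcal{A}| - i$, and decomposing $\mathcal{D}^{\bot}$ into a sum over connected components of its support realizes $\mathcal{A}$ as a nontrivial union of exactly $i$ circuits. Thus: \emph{there exists a nontrivial union of $i$ circuits of size $< k+i$ if and only if there exists a dimension-$i$ subcode $\mathcal{D}^{\bot}$ of $\mathcal{C}^{\bot}$ with $|{\rm supp}(\mathcal{D}^{\bot})| < k+i$, i.e.\ iff $d_i^{\bot} < k+i$.}

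With the dictionary in hand, I would finish as follows. The integer $\mu$ is the least $i$ such that \emph{every} nontrivial union of $i$ circuits has size $\ge k+i$, which by the dictionary is the least $i$ such that $d_i^{\bot} \ge k+i$; combined with the generalized Singleton bound $d_i^{\bot} \le k+i$ from (\ref{generalized-singleton-bound}) (applied to $\mathcal{C}^{\bot}$), this is the least $i$ with $d_i^{\bot} = k+i$. But that is exactly the quantity in (\ref{g_k_6}) shifted: $\mu = \min\{i : 1 \le i \le n-k,\ d_i^{\bot} = k+i\}$, and (\ref{g_k_6}) reads $d = n-k-\mu+2$, matching (\ref{tamo-thm}). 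One edge case to handle is when no such $i \le n-k$ exists, i.e.\ $d_i^{\bot} = k+i$ fails for all $i$ — but since $d_{n-k}^{\bot} = n$ always (the full dual code has full support), this cannot happen, so $\mu$ is well defined and $1 \le \mu \le n-k$.

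The main obstacle I anticipate is pinning down the exact meaning of ``nontrivial union of circuits'' as used in \cite{Tamo-matroid} and verifying that my claimed equivalence with ``support of a subcode of the dual'' is faithful in both directions — in particular that the minimal-$i$ witness on the matroid side can always be taken to be a support of a dimension-$i$ dual subcode with no wasted coordinates, and that the ``at least $k+\mu$'' threshold in Proposition~\ref{prop:tamo} lines up precisely (rather than off by one) with the generalized Singleton bound for $\mathcal{C}^{\bot}$. Once that bookkeeping is settled, the rest is just substitution into (\ref{g_k_5})--(\ref{g_k_6}).
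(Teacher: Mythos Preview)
Your approach is essentially the same as the paper's: set up the dictionary (circuit $\leftrightarrow$ dual codeword, nontrivial union of $i$ circuits $\leftrightarrow$ $i$-dimensional subcode of $\mathcal{C}^{\bot}$), deduce that ``every nontrivial union of $i$ circuits has size $\ge k+i$'' is equivalent to $d_i^{\bot}\ge k+i$, combine with the generalized Singleton bound to get $d_i^{\bot}=k+i$, and conclude $\mu=\min\{i:d_i^{\bot}=k+i\}$ so that (\ref{g_k_6}) gives $d=n-k-\mu+2$.

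One bookkeeping slip to fix: your opening paragraph states the target as $\mu=\max\{i:d_i^{\bot}<k+i\}$ (via (\ref{g_k_5})) and equivalently $\mu=\min\{i:d_i^{\bot}=k+i\}-1$ (via (\ref{g_k_6})), but both of these are off by one. Comparing $d=n-k-\mu+2$ with (\ref{g_k_5}) gives $\mu=\max\{i:d_i^{\bot}<k+i\}+1$, and with (\ref{g_k_6}) gives $\mu=\min\{i:d_i^{\bot}=k+i\}$; indeed, your third paragraph correctly lands on the latter, contradicting what you announced at the outset. Just align the plan with the execution.
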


\smallskip
According to the definition of the circuits, the columns of $G$ indexed by the coordinates in a circuit must be linearly dependent. Hence it is not hard to see that a circuit corresponds to a parity-check equation in $\mathcal{C}^{\bot}$.  A nontrivial union \cite{Tamo-matroid} of $\nu$ circuits corresponds to $\nu$ linearly independent parity-check equations, which corresponds to a subcode $\mathcal{D}^{\bot}$ in $\mathcal{C}^{\bot}$ with dimension $\nu$.
By the definition, the size of every nontrivial union of $\nu$ circuits in $\mathcal{M}$ is at least $k + \nu$, hence it follows that $d_{\nu}^{\bot} \ge k+\nu.$
On the other hand, by the generalized Singleton bound, $d_{\nu}^{\bot} \le k+\nu.$ Thus  $d_{\nu}^{\bot} = k+\nu.$ Then combining  the definition that $\mu$ is the minimal value of $\nu$, we have
 \begin{equation}
  \mu = {\rm{min}}\{\nu: 1\le \nu \le n-k, d_{\nu}^{\bot} = k+\nu\}.
\end{equation}
Hence by (\ref{g_k_6}), Proposition \ref{prop:tamo} follows.

\bigskip

Wang \emph{et al.} proposed a framework of regenerating sets \cite{Wang:ISIT,Wang:IT} which extends the matroid approach in \cite{Tamo-matroid} to include the vector case and nonlinear  case. For the linear case of this framework, Wang \emph{et al.} obtain the following result.
\begin{Proposition} \label{prop:wang}
\cite[Theorem 2]{Wang:IT} For an $[n,k,d]$ linear code,
\begin{equation}\label{wang-thm}
  d \le n-k-\rho + 1,
\end{equation}
where $\rho =  {\rm{max}}\{x: \Phi(x)-x < k\},$  $\Phi(x)={\rm{min}}\{|\cup_{j=1}^x R_j|:R_1, \cdots, R_x  \; \mbox{\rm has a nontirvial union}\}$.
\end{Proposition}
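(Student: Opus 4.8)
The plan is to show that the quantity $\rho$ from Wang \emph{et al.}'s framework coincides with $g_k^{\bot}-k$, after which the bound (\ref{wang-thm}) becomes an immediate consequence of the identity (\ref{g_k_5}) for the minimum distance in terms of the $k$-th gap number of $\mathcal{C}^{\bot}$. To do this I would first interpret the combinatorial objects of the framework code-theoretically, exactly as was done for Proposition \ref{prop:tamo}: a regenerating set $R_j$ corresponds to (the support of) a parity-check equation in $\mathcal{C}^{\bot}$, and a nontrivial union of $x$ regenerating sets $R_1,\dots,R_x$ corresponds to a set of $x$ linearly independent parity-check equations, i.e. to a subcode $\mathcal{D}^{\bot}\subseteq\mathcal{C}^{\bot}$ of dimension $x$ whose support is $\cup_{j=1}^{x}R_j$. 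Consequently $\Phi(x)=\min\{|{\rm supp}(\mathcal{D}^{\bot})|:\dim(\mathcal{D}^{\bot})=x\}=d_x^{\bot}$, the $x$-th GHW of $\mathcal{C}^{\bot}$.

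With this translation, $\rho=\max\{x:\Phi(x)-x<k\}=\max\{x:d_x^{\bot}<k+x\}$. Comparing with (\ref{g_k_5}), which states $d=n-k-\max\{i:1\le i\le n-k,\ d_i^{\bot}<k+i\}+1$, we see $\rho=g_k^{\bot}-k$ and hence
\begin{equation*}
d = n+1-g_k^{\bot} = n-k-\rho+1.
\end{equation*}
In fact this yields equality, so the inequality (\ref{wang-thm}) follows; one should still be slightly careful about the range of $x$ implicitly used in the definition of $\rho$ (the paper's formulation does not restrict $x\le n-k$), but for $x>n-k$ there is no subcode of $\mathcal{C}^{\bot}$ of dimension $x$, so such $x$ never satisfy $\Phi(x)-x<k$, and the maximum is unaffected. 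I would also note that $\mathcal{C}^{\bot}$ must possess enough low-weight parity-check equations for $\Phi$ to be finite on the relevant range; since every coordinate of $\mathcal{C}$ is covered by some nonzero codeword of $\mathcal{C}^{\bot}$ (otherwise $\mathcal{C}$ has a zero coordinate), the GHWs $d_x^{\bot}$ are well defined for $1\le x\le n-k$, matching (\ref{g_k_1})--(\ref{g_k_2}).

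The main obstacle, as in the treatment of Proposition \ref{prop:tamo}, is purely one of faithfully matching definitions: one must check that Wang \emph{et al.}'s ``nontrivial union'' of regenerating sets really does correspond bijectively (at the level of the quantities $\Phi(x)$ and $d_x^{\bot}$) to an $x$-dimensional subcode of $\mathcal{C}^{\bot}$ — i.e. that the minimizing family of regenerating sets can always be taken to consist of $x$ linearly independent parity-check equations, and conversely that any $x$-dimensional subcode gives rise to an admissible nontrivial union of size $x$. Once this dictionary is in place, no computation remains: the result is a restatement of (\ref{g_k_5}) through the identification $\Phi(x)=d_x^{\bot}$ and $\rho=g_k^{\bot}-k$.
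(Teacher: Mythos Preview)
Your proposal is correct and follows essentially the same route as the paper: identify a regenerating set with a parity-check equation in $\mathcal{C}^{\bot}$ and a nontrivial union of $x$ of them with an $x$-dimensional subcode, deduce $\Phi(x)=d_x^{\bot}$, conclude $\rho=g_k^{\bot}-k$ via (\ref{g_k_1}), and then read off $d=n-k-\rho+1$ from (\ref{g_k_5}), obtaining equality in (\ref{wang-thm}). Your additional remarks on the implicit range $x\le n-k$ and on the bijection underlying the identification $\Phi(x)=d_x^{\bot}$ are more careful than the paper's own discussion, which simply asserts the dictionary and moves on.
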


\smallskip
Similarly, according to the definition of the regenerating set and the nontrivial union \cite{Wang:IT}, we can see that a regenerating set corresponds to a parity-check equation in $\mathcal{C}^{\bot}$ and a nontrivial union of $x$ regenerating sets corresponds to a subcode $\mathcal{D}^{\bot}$ in $\mathcal{C}^{\bot}$ with dimension $x$.  Hence in terms of GHWs, $\Phi(x) = d_x^{\bot}$.  Then by (\ref{g_k_1}),
\begin{equation}
 \rho= {\rm{max}}\{x: d_x^{\bot}-x < k, 1\le x \le n-k\} = g_k^{\bot} - k.
\end{equation}
Thus by (\ref{g_k_5}) we can obtain that
\begin{equation}
  d =n- k  - \rho +1,
\end{equation}
 which gives a refined explanation of the Proposition \ref{prop:wang} in terms of the GHWs where (\ref{wang-thm}) holds with equality.

\section{Conclusions}
\label{sec:conclusion}
In this paper, we  studied the GHWs of LRCs. A generalized Singleton-like bound on the GHWs of  LRCs were given. For an optimal $(n,k,r)$ LRC $\mathcal{C}$ meeting the Singleton-like bound (\ref{singleton-like-bound}) with $r \mid k$, the weight hierarchy of $\mathcal{C}$ and its dual code $\mathcal{C}^{\bot}$  was completely determined. Moreover, the $i$-th GHW of an optimal $(n,k,r)$  LRC with  $r \mid k$ attains the generalized Singleton-like bound for all $1\le i \le k$. For an optimal $(n,k,r)$ LRC with $r \nmid k$, we gave lower bounds on the GHWs of the optimal LRC and its dual code. At last, two general bounds on linear codes in terms of the GHWs were presented. It was also shown that the result on the bounds of minimum distances of linear codes \cite{Tamo-matroid} can also be explained by using the $k$-th gap number of the dual codes, and the result in \cite{Wang:IT} can be further refined by using the $k$-th gap number of the dual codes.




\begin{thebibliography}{1}


\bibitem{Gopalan}
P. Gopalan, C. Huang, H. Simitci, and S. Yekhanin, ``On the locality of
codeword symbols,'' \emph{IEEE Trans. Inf. Theory}, vol. 58, no. 11, pp. 6925-6934, Nov. 2012.

\bibitem{MacWilliams}
F. J. MacWilliams and N. J. A. Sloane, \emph{The Theory of Error-Correcting Codes}. Amsterdam, The Netherlands: North-Holland, 1981 (3rd printing).

\bibitem{Hao}
J. Hao and S.-T. Xia, ``Bounds and constructions of locally repairable codes: parity-check matrix approach,'' [Online]. Available: http://arxiv.org/abs/1601.05595.

\bibitem{Cadambe:IT}
V. Cadambe and A. Mazumdar, ``Bounds on the size of locally recoverable codes,'' \emph{IEEE Trans. Inf. Theory}, vol. 61, no. 11, pp. 5787-5794, Nov. 2015.


\bibitem{Tamo-RS}
I. Tamo and A. Barg, ``A family of optimal locally recoverable codes,'' \emph{IEEE Trans. Inf. Theory}, vol. 60, no. 8, pp. 4661-4676, Aug. 2014.


\bibitem{Tamo-matroid}
I. Tamo, D. S. Papailiopoulos, and A. G. Dimakis, ``Optimal locally repairable codes and connections to matroid theory,'' in \emph{Proc. Int. Symp. Inf. Theory (ISIT)}, Turkey, Istanbul, Jul. 2013, pp. 1814-1818.



\bibitem{Tamo-cyclic}
I. Tamo, A. Barg, S. Goparaju, and R. Calderbank, ``Cyclic LRC codes and their subfield subcodes,'' in \emph{Proc. Int. Symp. Inf. Theory (ISIT)}, Hong Kong, China, Jun. 2015, pp.  1262 - 1266.

\bibitem{Barg-curve}
A. Barg, I. Tamo, and S. Vladut, ``Locally recoverable codes on algebraic curves,'' in \emph{Proc. Int. Symp. Inf. Theory (ISIT)}, Hong Kong, China, Jun. 2015, pp. 1252 - 1256.


\bibitem{binaryLRC}
J. Hao, S.-T. Xia and Bin Chen, ``Some results on optimal locally repairable codes,'' in  \emph{Proc. Int. Symp. Inf. Theory (ISIT)}, Barcelona, Spain, Jul. 2016, pp 440-444.

\bibitem{delta-bound}
N. Prakash, G. M. Kamath, V. Lalitha and P. V. Kumar, ``Optimal linear codes with a local-error-correction property'', in \emph{Proc. Int. Symp. Inf. Theory (ISIT)}, Cambridge, MA, USA,  Jul. 2012, pp. 2776-2780.


\bibitem{Wei}
V. K. Wei, T. Kl{\o}ve, and O. Ytrehus, ``Generalized Hamming weights
for linear codes,'' \emph{IEEE Trans. Inf. Theory}, vol. 37, no. 5, pp. 1412-1418, Sep. 1991.



\bibitem{Helleseth}
T. Helleseth, T. Kl{\o}ve, and O. Ytrehus, ``Generalized Hamming weights
of linear codes,'' \emph{IEEE Trans. Inf. Theory}, vol. 38, no. 3, pp. 1133-1140, May. 1992.





\bibitem{Heijnen}
P. Heijnen and R. Pellikaan, ``Generalized Hamming weights of $q$-ary Reed-Muller codes,'' \emph{IEEE Trans. Inf. Theory}, vol. 44, no. 1, pp. 181-196, Jan. 1998.


\bibitem{Van}
G. Van der Geer and M. Van der Vlugt,   ``On generalized Hamming weights of BCH codes,'' \emph{IEEE Trans. Inf. Theory}, vol. 40, no. 2,
pp. 543-546, Mar. 1994.


\bibitem{Shim}
C. Shim and H. Chung,   ``On the second generalized Hamming weight of the dual code of a double-error-correcting binary BCH code,'' \emph{IEEE Trans. Inf. Theory}, vol. 41, no. 3, pp. 805-808, May 1995.



\bibitem{Vlugt}
M. Van der Vlugt,   ``A note on generalized Hamming weights of BCH(2),'' \emph{IEEE Trans. Inf. Theory}, vol. 42, no. 1, pp. 254-256, Jan. 1996.


\bibitem{Ballico}
E. Ballico, C. Marcolla, ``Higher Hamming weights for locally recoverable codes on algebraic curves,'' \emph{Finite Fields and Their Applications}, vol. 40, pp. 61 - 72, Jul. 2016.


\bibitem{Lalitha}
V. Lalitha and S. V. Lokamv, ``Weight enumerators and higher support weights of maximally recoverable codes,'' in \emph{Proc. 48th Annu. Allerton Conf.
Commun., Control, Comput. (Allerton)}, Sep./Oct. 2015, pp. 835-842.

\bibitem{Goppa-IT}
P. Gopalan, C. Huang, B. Jenkins, and S. Yekhanin, ``Explicit maximally recoverable codes with locality,'' \emph{IEEE Trans. Inf. Theory}, vol. 60, no. 9, pp. 5245-5256, Sep. 2014.




\bibitem{Wang:ISIT}
A. Wang and Z. Zhang, ``Repair locality from a combinatorial perspective,'' in \emph{Proc. Int. Symp. Inf. Theory (ISIT)}, Honolulu, HI, USA, Jul. 2014, pp. 1972-1976.

\bibitem{Wang:IT}
A. Wang and Z. Zhang, ``An Integer programming-based bound for locally repairable codes,'' \emph{IEEE Trans. Inf. Theory}, vol. 61, no. 10, pp. 5280-5294, Oct. 2015.



\end{thebibliography}
%

\end{document}